\numberwithin{equation}{section}
\newcommand{\beq}{\begin{equation}}
\newcommand{\eeq}{\end{equation}}
\newcommand{\beqa}{\begin{eqnarray}}
\newcommand{\eeqa}{\end{eqnarray}}
\newcommand{\bc}{\begin{cases}}
\newcommand{\ec}{\end{cases}}
\newcommand{\nn}{\nonumber}
\newtheorem{conjecture}{Conjecture}
\newtheorem{definition}{Definition}
\newtheorem{proposition}{Proposition}
\newtheorem{theorem}{Theorem}
\newtheorem{corollary}{Corollary}
\newtheorem{lemma}{Lemma}
\theoremstyle{definition}
\newtheorem{remark}{\textbf{Remark}}
\begin{document}

\subjclass[2020]{MSC: 39A05, 05A40, 34G20}
\title[Categorical Discretization of Dynamical Systems via Rota  Algebras]{Laurent sequences, extended Rota Algebras \\ and Categorical Discretization \\ of Dynamical Systems }

\author{Miguel A. Rodr\'iguez}
\address{Departamento de F\'{\i}sica Te\'{o}rica, Facultad de F\'{\i}sicas, Universidad Complutense de Madrid, 28040 -- Madrid, Spain}
\email{rodrigue@ucm.es}

\author{Piergiulio Tempesta}
\address{Instituto de Ciencias Matem\'aticas (CSIC-UAM-UCM-UC3M), C/ Nicol\'as Cabrera, No 13--15, 28049 Madrid, Spain}
\address{Departamento de F\'{\i}sica Te\'{o}rica, Facultad de F\'{\i}sicas, Universidad Complutense de Madrid, 28040 -- Madrid, Spain}
\email{p.tempesta@fis.ucm.es, piergiulio.tempesta@icmat.es}

\keywords{Differential equations, integrable maps, Galois differential algebras, category theory, recurrences, lattice field theories 
}

\begin{abstract}
We introduce a novel integrability-preserving discretization for a broad
class of differential equations with variable coefficients, encompassing
both linear and nonlinear cases. The construction is achieved via a categorical
approach that enables a unified treatment of continuous and discrete dynamical
systems.

Our theoretical framework is grounded on a generalization of G. C.
Rota's finite operator calculus, which enables us to extend the theory
of basic sequence of polynomials to the setting of Laurent polynomials.
Accordingly, we introduce the notion of an
\textit{extended Rota algebra}, defined as a Galois differential algebra
in which all difference operators act as derivations on the space of Laurent
power series with respect to a suitably defined functional product.

The core of our theory relies on the existence of covariant functors between
the newly proposed Rota category of Galois differential algebras and suitable
categories of abstract dynamical systems.

In this setting, under certain regularity assumptions, a differential equation
and its discrete analogues are naturally interpreted as objects of the
same category. This perspective enables the construction of a vast class
of integrable maps that share with their continuous analogues a wide set
of exact solutions, \textit{regular} or \textit{singular} and, in the linear
case, the Picard-Vessiot group.
\end{abstract}

\date{July 2, 2026}
\maketitle

\tableofcontents

\vspace{3mm}

\section{Introduction}
\label{sec1}

\subsection{Historical background}
\label{sec1.1}

The discretization of dynamical systems in an
\textit{integrability preserving} way has been the subject of intensive
investigation over the past decades \cite{Suris},
\cite{VeselovRMS1991}, owing to its considerable impact across a wide range
of fields, including numerical analysis, discrete mathematics, algorithm
theory, quantum physics and statistical mechanics. The guiding principle
in this line of research is that a meaningful discretization should not
merely approximate the continuous dynamics, but rather reproduce its underlying
geometric and algebraic structures. In particular, the preservation of
first integrals, Poisson or symplectic structures, and Lax representations
has emerged as a key requirement for ensuring that the discrete evolution
retains the qualitative features of the original system. This viewpoint
has led to the development of geometric and structure-preserving integrators,
as well as to intrinsically discrete models that are integrable in their
own right.

In this framework, discrete differential geometry and its associated algebraic
structures have played an increasingly prominent role \cite{AL1}, \cite{BS1},
\cite{BS2}, \cite{GK}, \cite{Nov1}, \cite{Nov2}, \cite{Nov3},
\cite{LNP}, \cite{Kuper1}, \cite{MV}. A central insight is that integrability
can often be reformulated in purely discrete terms, for instance through
multidimensional consistency, zero-curvature representations on lattices,
or compatibility conditions for discrete connections. These approaches
reveal that many integrable hierarchies admit natural discretizations that
are not ad hoc approximations, but instead arise from deeper combinatorial
or algebraic principles. In particular, the interplay between discrete
geometry, Poisson-Lie structures, and algebraic frameworks such as cluster-type
transformations has provided a unifying language to describe a broad class
of integrable discrete systems.

At the same time, the need for structurally consistent discretizations
is strongly motivated by applications in fundamental physics. Lattice formulations
of field theories \cite{MM}, together with modern approaches to quantum
mechanics \cite{FL} and quantum gravity \cite{ash}, \cite{thooft},
\cite{RS}, require discrete counterparts of continuum equations in which
key properties, such as gauge invariance, conservation laws, or integrability
features, are retained at the discrete level. In such contexts, discretization
is not simply a computational device, but a conceptual tool that may encode
fundamental aspects of the underlying theory, particularly when a continuum
description is expected to emerge only in a suitable limit.

Closely related to these developments is the problem of symmetry-preserving
discretization of continuous models, which has also attracted considerable
attention (see, e.g., \cite{LWYbook} and references therein). Here, the
aim is to construct discrete evolutions that exactly inherit the symmetry
group of the original equations, thereby ensuring the preservation of invariant
manifolds and structural constraints. This perspective is deeply intertwined
with integrability, since many integrable systems are characterized by
rich symmetry algebras, and their preservation at the discrete level is
often essential for maintaining solvability.

Within this general setting, the approach developed in this paper is motivated
by the idea that integrability-preserving discretizations can be systematically
constructed by merging category theory with the finite operator theory
developed by Roman, Rota and collaborators. Our results point towards a
framework in which discrete dynamics emerge naturally from structural considerations,
allowing one to retain integrability properties in a transparent and conceptually
unified way.

The central feature of our discretization scheme is that all operators
involved--namely, the delta operators of the Roman-Rota formalism--act
as \textit{derivations} on suitable Galois function algebras. Consequently,
many structural properties of categorically equivalent equations are preserved.
In particular, a large class of solutions is retained. We emphasize that, to the best of our knowledge, the classes of difference
equations arising from our approach are novel, and not related to those
obtained with other discretization schemes.

\subsection{Statement of the problem}
\label{sec1.2}

The purpose of this paper is to introduce a unified theoretical framework
which allows us to address and solve, in a quite general sense and from
a novel perspective, the longstanding problem of the integrability-preserving
discretization of a large class of dynamical systems, both
\textit{linear and nonlinear}, even in the case of variable coefficients.
These coefficients are expressed in terms of either real
\textit{analytic} functions (regular case), or Laurent polynomials with
\textit{polar singularities} (singular case).

Precisely, the main theorems of this work ensure the integrability-preserving
discretization of two classes of differential equations with variable coefficients:

\noindent
a) Linear ODEs of the form
\begin{equation}
a_{N}(x) \frac{d^{N}}{dx^{N}}y + a_{N-1}(x) \frac{d^{N-1}}{dx^{N-1}}y +
\cdots +a_{1}(x) \frac{d}{dx} y+a_{0}(x) y +g(x)=0 \ ;
\label{lincont}
\end{equation}
b) Nonlinear ODEs of the form
\begin{equation}
\frac{d^{m}}{dx^{m}}y= c_{N}(x) y^{N}+c_{N-1}(x) y^{N-1}+\cdots +c_{1}(x)
y+ c_{0}(x) \ .
\label{nonlincont}
\end{equation}
Here $y: \mathbb{R}\to \mathbb{R}$ is the dependent, scalar variable;
$N, m\in \mathbb{N}\setminus \{0\}$,
$a_{0}(x),\ldots , a_{N}(x), g(x)$ and $c_{0}(x),\ldots , c_{N}(x)$ are
real analytic functions, or Laurent polynomials.

For each of the families of equations \eqref{lincont} and \eqref{nonlincont}, we will provide two distinct discrete analogs, designed
to preserve either regular or singular solutions. All resulting equations
are defined on a regular mesh of points. In the continuum limit, these
difference equations converge to their corresponding differential equations,
thereby providing a \textit{discrete approximation} of the original continuous
models.

The proposed approach combines the theory of Galois differential algebras
and an extension of the theory of finite difference operators in the formulation
given by G.-C. Rota \cite{Rota}, and S. Roman \cite{Roman}, with methods
from category theory. Within our framework, a continuous dynamical system
and its infinitely many discrete counterparts are regarded as distinct
representations of an underlying abstract equation, and are naturally interpreted
as objects in a category of equations associated with the given system.
The morphisms of this category correspond to the various discrete realizations
of the abstract equation. Accordingly, the resulting difference equations
are said to be \textit{categorically equivalent}.

The theoretical framework developed in the present work allows us to substantially
generalize the results of \cite{TempestaJDE}, where discretizations were
restricted to nonlinear dynamical systems, and with \textit{constant} coefficients
only. In particular, our approach provides the first general treatment
of discretizations for nonlinear systems with variable coefficients. Moreover,
for the novel case of linear dynamical systems, we extend the Frobenius-type
theorem presented in \cite{RRT2025} -- originally formulated for second-order
equations -- to encompass differential equations of arbitrary order with
variable coefficients. Another significant novelty concerns the preservation
of \textit{singular solutions} of polar type, not just regular ones as in
the previous studies. This property is achieved through the introduction
of suitable \textit{Laurent basic sequences of polynomials}, a concept developed
in the present work.

\subsection{Extended Rota algebras}
\label{sec1.3}

We shall introduce the notion of
\textit{extended Rota differential algebra} as a pair
$\left (\mathcal{F_{\mathcal{L}}},\mathcal{Q}\right )$ where
$(\mathcal{F_{L}}, +,\cdot , *)$ is an algebra of formal Laurent series
on a mesh $\mathcal{L}$ of points, endowed with an associative and commutative
product ``*''; besides, $\mathcal{Q}$ is a \textit{delta operator}
\cite{Rota} that acts as a derivation with respect to this product. This
notion generalizes that of Rota algebra, introduced in
\cite{TempestaJDE} and further studied in \cite{RRT2025},
\cite{RT2024}, where standard formal power series were considered.

In particular, this standpoint allows one to consider ordinary difference operators
as derivations under suitable function products. The idea of an adapted
product of polynomials, ensuring for difference operators the validity
of the Leibniz rule, was proposed in the important papers \cite{Ward},
\cite{BF}.

An analogous polynomial product has appeared, in a different context, in
the theory of linear operators acting on polynomial spaces
\cite{ismail}. The collection of all Rota algebras
$\mathcal{R} (\mathcal{F}) $ represents a subcategory of the category of
associative algebras \cite{Mac Lane}.

\subsection{A categorical discretization}
\label{sec1.4}

Given a continuous dynamical system of the form  \eqref{lincont} or \eqref{nonlincont}, we first select a specific difference operator to serve
as the discrete derivative, imposing the additional requirement that it
be a delta operator $\mathcal{Q}$. The standard forward, backward and symmetric
derivatives provide typical examples of discrete delta operators; however,
one can construct infinitely many alternatives. Next, we define the Galois
differential algebra in which the delta operator $\mathcal{Q}$ acts as
a derivation, by constructing the corresponding $*$-product. The original
dynamical system is then defined in the Galois algebra, where it typically
assumes the form of a recurrence equation. The resulting discretization
procedure is inherently structure-preserving.

Thus, in general, we associate with the family of dynamical systems defined
by eq.~\eqref{lincont} a category of linear equations
$\mathcal{K}_{\{ a_{0},\ldots , a_{N},g \}}$; similarly, the family  \eqref{nonlincont} corresponds to a category of nonlinear equations, denoted
by $\mathcal{N}_{\{m; c_{0},\ldots , c_{N}\}}$. We also establish the existence
of two natural functors
$F: \mathcal{R}(\mathcal{F}) \longrightarrow \mathcal{K}_{\{ a_{0},
\ldots , a_{N},g \}}$ and
$G: \mathcal{R}(\mathcal{F}) \longrightarrow \mathcal{N}_{\{ m; c_{0},
\ldots , c_{N} \}}$. Within this algebraic framework, different discretizations
of a given continuous dynamical system are represented as morphisms within
the same category. In particular, differential equations can be mapped
into difference equations
\textit{while preserving the underlying differential structure}. Consequently,
the integrability properties of a given continuous dynamical system are
naturally inherited by its associated discrete counterparts and in particular,
solutions (regular or singular) of the continuous system are isomorphically
mapped to (regular or singular) solutions of the corresponding discrete
equations.

A crucial feature of the integrable maps\footnote{Throughout this work, the expressions ``map'',``difference equation'' or ``discrete equation'' will be used as synonyms.} associated with eqs.~\eqref{lincont} and \eqref{nonlincont}, according to the procedure outlined above, is that they can be interpreted as \textit{discrete analogues of integro-differential equations}. Indeed, these maps are nonlocal recurrences possessing the general form
\begin{equation}
P(\Delta ) u(n)=\mathcal{I}[u(n)]
\label{nonlocal}.
\end{equation}
Here $P(\Delta )$ is a polynomial in a delta operator (typically representing
a discrete derivative of a given order), $u(n)$ is the dependent variable
of the map, $n$ is the independent discrete variable, and
$\mathcal{I}[u(n)]$ is a functional depending on all values of $u$ ranging
from an initial point up to $n$. This aspect reflects the inherent nonlocality
of the equation \eqref{nonlocal}, which, in turn, originates from the intrinsic
nonlocal nature of the $*$-product. This property can be interpreted in
light of a fundamental no-go theorem proved in \cite{KSS}, which states
that it is impossible to define a field theory on an infinite lattice endowed
with a nontrivial product rule that simultaneously satisfies the Leibniz
rule, translational invariance, and locality. Consequently, the structure-preserving
discretization scheme proposed in this work - based on enforcing the Leibniz
rule for difference operators - is intrinsically linked to the nonlocal character of the resulting integrable maps.

In this scenario, the most elementary instance arises in the discretization
of linear differential equations with constant coefficients. Here, the
resulting discrete equations are standard, i.e., \textit{local} difference
equations, depending on a finite number of lattice sites determined solely
by the order of the discrete derivative. Conversely, for dynamical systems
with variable coefficients or nonlinear dependencies, the nonlocal aspects
of the theory become unavoidable and explicitly manifest.

\subsection{Main results}
\label{sec1.5}

The general discretization theory we propose relies on the following new
results.

\medskip

I) We prove that the integrable maps representing discrete versions of
the dynamical systems \eqref{lincont} or \eqref{nonlincont} satisfy the
following property: if $\sum _{k=0}^{\infty} \zeta _{k} x^{k}$ (or
$ \sum _{k=1}^{s}\frac{\zeta _{k}}{x^{k}}$) is an analytic (singular) solution
of an equation of the form \eqref{lincont} (Theorem~\ref{main1} and Theorem~\ref{main3}) or \eqref{nonlincont} (Theorem~\ref{main2} and Theorem~\ref{main4}), then $\sum _{k=0}^{n} \zeta _{k} p_{k}(n)$ (or
$\sum _{k=-s}^{-1} \zeta _{k} p_{k}(n)$), where $p_{k}(n)$ are the Laurent
basic polynomials associated with the chosen  operator, for each
$n$ is a solution of the corresponding discrete map. In other words, under
mild assumptions,
\textit{analytic (singular) solutions of the continuous systems \eqref{lincont} or \eqref{nonlincont} are transformed into regular (singular)
exact solutions of the associated integrable maps} \eqref{nonlocal}. Besides,
the discrete maps (eqs. \eqref{eqlin2}, \eqref{eqnonlin} for the regular
case and eqs. \eqref{eq:8.4} and \eqref{eq:8.14} for the singular case)
associated with the original continuous systems are derived, through our
categorical approach, in a constructive and explicit way.

\medskip
II) A Galois theory for the integrable maps considered in this work is
developed, starting from the novel notion of Rota category. Within this
setting, we also establish the existence of an isomorphism between the
\textit{Picard--Vessiot group} of a linear differential equation with constant
coefficients and that of the corresponding difference equation obtained
through our categorical approach.

\medskip

In the continuum limit, when the lattice step $h$ goes to zero, the site
$n$ goes to infinite and the product $nh$ remains constant, these new integrable
maps reduce to the original differential equations (as we will show in
several examples).

\subsection{Discussion}
\label{sec1.6}

We recall that the time scale calculus, initiated by Hilger in
\cite{Hilger1990} represents a well-established and intriguing approach
aimed at unifying discrete and continuous calculus. This unification is
achieved by defining a general domain that can be continuous, discrete,
or mixed (corresponding to time scales or, more generally, measure chains)
and by introducing appropriate jump operators on this domain. Over the
past two decades, this theory has been extensively investigated by many
authors (see, e.g., the monograph \cite{BG2017}), resulting in a substantial
body of results and applications.

However, the theoretical framework proposed in this article is independent
of, and fundamentally distinct from, time scale calculus. Instead, our
approach is based on the guiding principle of preserving the Leibniz rule
in discrete calculus -- through the notion of Rota's Galois algebra -- and
on the use of an adapted point mesh for discretization, defined by the
zeros of a set of basic polynomials.

We emphasize that our categorical approach, in principle, allows for the reformulation of the main theorems of this work for
\textit{arbitrary discrete derivatives}. This can be done by analogy with
the case of the forward difference operator, which we have analyzed explicitly
due to its prominence in applications. In particular, the extension of
our theory to discretizations based on the backward difference operator
is entirely straightforward.

\section{Algebraic preliminaries. Delta operators}
\label{sec2}

The theory of delta operators, introduced in \cite{Rota} as a foundational
framework for combinatorics, has been extensively developed in the literature
(see also \cite{RR}, \cite{Roman}). Let $\mathcal{P}$ denote the space
of polynomials in a variable $x\in \mathbb{K}$, where $\mathbb{K}$ is a
field of characteristic zero and let $\mathbb{N}$ denote the set of non--negative
integers.

Let $T$ denote the shift operator, whose action on a function $f$ is given
by $T f\left (x\right )=f\left (x+1\right )$.

\begin{definition}
\label{deltaop}
An operator $S$ is said to be \textit{shift--invariant} if it commutes with
the shift operator $T$. A shift--invariant operator $\mathcal{Q}$ is a
\textit{delta operator} if $\mathcal{Q} x=const\neq 0$.
\end{definition}
We deduce immediately the following property.
\begin{corollary}
\label{cor1}
For every constant $c\in \mathbb{R}$, $\mathcal{Q} c=0$.
\end{corollary}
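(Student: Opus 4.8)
The plan is to derive the statement directly from the defining properties of a delta operator, namely shift-invariance together with the normalization $\mathcal{Q}x = c_0$ for some constant $c_0 \neq 0$, and to exploit that $\mathcal{Q}$ is a linear operator on $\mathcal{P}$ (as is implicit in the Rota finite-operator calculus). First I would reduce the general claim to the single identity $\mathcal{Q}1 = 0$: once this is known, linearity gives $\mathcal{Q}c = \mathcal{Q}(c\cdot 1) = c\,\mathcal{Q}1 = 0$ for every constant $c$.

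To establish $\mathcal{Q}1 = 0$, the key step is to apply $\mathcal{Q}$ to $Tx = x+1$ and invoke commutation with the shift. Since $T$ leaves every constant function unchanged ($Tc_0 = c_0$ because $c_0(x+1) = c_0$), shift-invariance yields
\[
\mathcal{Q}(x+1) = \mathcal{Q}(Tx) = T(\mathcal{Q}x) = T(c_0) = c_0 .
\]
On the other hand, linearity of $\mathcal{Q}$ gives $\mathcal{Q}(x+1) = \mathcal{Q}x + \mathcal{Q}1 = c_0 + \mathcal{Q}1$. Comparing the two expressions forces $\mathcal{Q}1 = 0$, and the corollary follows.

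I do not expect any genuine obstacle here: the result is an immediate consequence of Definition~\ref{deltaop}. The only points deserving a word of care are the two structural facts used silently, namely that $\mathcal{Q}$ acts linearly and that the shift operator $T$ fixes constant functions; both are standard in this setting, the latter being nothing more than the observation that constants are eigenfunctions of $T$ with eigenvalue $1$. Notably, the hypothesis $c_0 \neq 0$ is not even needed for this particular corollary, since the cancellation of $c_0$ in the comparison step works for any value of $\mathcal{Q}x$.
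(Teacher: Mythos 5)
Your argument is correct and is precisely the classical Rota argument that the paper is implicitly invoking when it says the property is deduced ``immediately'': apply $\mathcal{Q}$ to $Tx=x+1$, use shift-invariance and the fact that $T$ fixes constants to get $\mathcal{Q}(x+1)=\mathcal{Q}x$, and compare with linearity to conclude $\mathcal{Q}1=0$, hence $\mathcal{Q}c=0$. Your closing remarks (linearity of $\mathcal{Q}$, $T$ fixing constants, and the observation that $const\neq 0$ is not needed here) are all accurate.
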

The most common examples of delta operators are provided by the derivative
$D$, the forward discrete derivative $\Delta =T-1$, the backward derivative
$\nabla :=1-T^{-1}$ and the symmetric operator
$\Delta ^{s}=\frac{T-T^{-1}}{2}$.

Given a delta operator $\mathcal{Q}$, a polynomial sequence
$\{p_{k}\left (x\right )\}_{k\in \mathbb{N}}$ is said to be the sequence
of \textit{basic polynomials} for $\mathcal{Q}$ if the following conditions
are satisfied: $ (1)\text{ \ } p_{0}\left ( x\right ) =1$,
$ (2)\text{ \ }p_{k}\left ( 0\right ) =0\ \text{for all }k>0;\text{ \ }$
$ (3)\text{ \ }\mathcal{Q} p_{k}\left ( x\right ) =kp_{k-1}\left ( x
\right )$.

Notice that, for a given delta operator $\mathcal{Q}$ the sequence of associated
basic polynomials is unique.

\subsection{Formal groups and delta operators}
\label{sec2.1}

In this section, as a new result, we will demonstrate that the theory of
delta operators admits a natural interpretation within the framework of
formal group theory \cite{Ray}. Over the past decades, formal group theory
has been extensively investigated due to its central role in algebraic
topology \cite{BMN}, \cite{Lazard}, cobordism theory \cite{Quillen}, analytic
number theory \cite{Honda}, \cite{PT2010}, \cite{TempestaTAMS}, and related
fields.

Following \cite{Haze}, \cite{BMN}, we remind that a commutative one--dimensional
formal group law over a commutative, unital ring $A$ is a formal power
series~$\Phi (x,y)\in A\llbracket x,y\rrbracket $ such that
\begin{enumerate}
\item  $\Phi (x,0)=\Phi (0,x)=x$,
\item 
$\Phi \left (\Phi (x,y),z\right )=\Phi \left (x,\Phi (y,z)\right )$.
\end{enumerate}
 When $\Phi (x,y)=\Phi (y,x)$, the formal group law is said to be commutative.

Let us consider the polynomial ring
$\mathbb{Q}\left [ c_{1},c_{2},...\right ] $ and the formal group logarithm
$ F\left ( u\right ) =u+c_{1}\frac{u^{2}}{2}+c_{2}\frac{u^{3}}{3}+
\cdots $ Let $G\left ( v\right ) $ be its inverse series, i.e., the formal
group exponential
\begin{equation}
G\left ( v\right ) =v-c_{1}\frac{v^{2}}{2}+\left ( 3c_{1}^{2}-2c_{2}
\right ) \frac{v^{3}}{6}+\cdots
\label{Ap12}
\end{equation}
so that \textit{$F\left ( G\left ( v\right ) \right ) =v$}. The formal group
law associated with these series, known as the
\textit{Lazard Universal Formal Group}, is given by
\begin{equation*}
\Phi \left ( u_{1},u_{2}\right ) =G\left ( F\left ( u_{1}\right ) +F
\left ( u_{2}\right ) \right ).
\end{equation*}
It is defined over the Lazard ring $L$, i.e. the subring of
$\mathbb{Q}\left [ c_{1},c_{2},...\right ] $ generated by the coefficients
of the power series
$G\left ( F\left ( u_{1}\right ) +F\left ( u_{2}\right ) \right ) $. For
any commutative one-dimensional formal group law over any ring $A$, there
exists a unique homomorphism $L\to A$ under which the Lazard group law
is mapped into the given group law (\textit{universal property} of the Lazard
group).

In algebraic topology, delta operators are related to Thom classes and
complex cobordism theory \cite{Ray2}. In order to relate formal groups
to difference delta operators, we first describe a simple technique for
generating a family of difference delta operators. More precisely, we introduce
the difference operators
\begin{equation}
\label{2.9}
\Delta _{p}=\frac{1}{h}\sum _{k=l}^{m}\alpha _{k}T^{k}\text{,}\quad l
\text{, }m\in \mathbb{Z}\text{,}\mathbb{\quad}l<m\text{,\quad}m-l=p
\text{,} \quad \sum _{k=l}^{m}\alpha _{k}=0\text{,}\quad \sum _{k=l}^{m}k
\alpha _{k}=c
\end{equation}
where $h$ can be interpreted as a mesh spacing and $\alpha _{k}$  are
constants with $\alpha _{m}\neq 0$, $\alpha _{l}\neq 0$. We choose
$c=1$, to reproduce possibly the derivative $D$ in the continuum limit.
A difference operator of the form \eqref{2.9} is said to be a delta operator of order $p$, if it approximates the continuous derivative up to terms
of order $h^{p}$ \cite{LTW1}. Apart from the two constraints in relation \eqref{2.9}, one can arbitrarily choose $m-l-1$ more linear conditions
to fix all constants $\alpha _{k}$. Consequently, by means of the representation
$T\sim e^{v}$ (the ``symbol'' of the translation operator), each delta
operator of the family \eqref{2.9} is associated with a group exponential  \eqref{Ap12} and then with a realization of the one-dimensional Lazard
universal formal group law. Conversely, with the identification
$v\sim D$, there corresponds to each formal group exponential a delta operator
(see also \cite{Ray}, \cite{TempestaTAMS}).

\section{Laurent basic sequences of polynomials}
\label{sec3}

\subsection{Definitions}
\label{sec3.1}

In this section, we develop a novel approach to finite operator theory,
by generalizing the classical notion of basic polynomials
\cite{Rota,Roman} to the case of Laurent polynomials. In the literature,
several extensions of umbral calculus (the old denomination of finite operator
theory) are available. In particular, in \cite{Roman}, nonclassical umbral
calculi are also discussed. Also, in \cite{LR1995} and \cite{BBN1986},
several extensions to generalized finite differences calculi are reviewed.
However, the notions of Laurent basic sequences, dual sequences, twisted
Rota algebras, etc. to the best of our knowledge were not proposed before
in the literature.
\begin{definition}
\label{defn2}
Given a delta operator $\mathcal{Q}$, a Laurent basic sequence is a sequence
of rational functions $\{p_{k}(x)\}_{k\in \mathbb{Z}}$, which reduce to
polynomials for $k\in \mathbb{N}$, defined by the following properties:
\begin{align}
& 1)\text{ \ }p_{0}\left ( x\right ) =1
\notag
\\
& 2)\text{ \ }p_{k}\left ( 0\right ) =0 & k\in \mathbb{N}\setminus \{0
\}
\nonumber
\\
& 3)\text{ \ }\mathcal{Q} p_{k}\left ( x\right ) =kp_{k-1}\left ( x
\right ), & k\in \mathbb{Z} .
\nonumber
\end{align}
\end{definition}
Thus, a Laurent basic sequence for a delta operator $\mathcal{Q}$ is a
sequence of basic polynomials extended for $k<0$ to a sequence of rational
polynomials satisfying a natural ``derivation'' property.

Let us denote by $\mathcal{P}^{L}$ the space of Laurent polynomials in
a variable $x\in \mathbb{K}$. Let $\mathcal Q$ be a delta operator acting
on $\mathcal{P}^{L}$, let $\mathcal{D}$ denote the set of all delta operators,
and $\{p_{k}(x)\}_{k\in \mathbb{Z}}$ be the basic sequence of Laurent polynomials
of order $k$ uniquely associated with $\mathcal{Q}$. Let
$\mathcal{H}$ denote the algebra of formal Laurent series in $x$. Since
the polynomials $\{p_{k}(x)\}_{k\in \mathbb{Z}}$ for any choice of
$\mathcal Q$ form a basis of $\mathcal{H}$, any $f\in \mathcal{H}$ can
be expanded as a formal Laurent series of the form
$ f(x)=\sum _{k=-\infty}^{\infty}a_{k} p_{k}(x)
\label{exp}
 $. In this way, we can extend the action of delta operators on functions.
Let $\mathcal{L}$ be a set of points on the real line, isomorphic to
$\mathbb{Z}$. Denote by $\mathcal{H_{L}}$ the vector space of the formal
power series defined on $\mathcal{L}$. The space $\mathcal{H}$ (and, consequently,
$\mathcal{H_{L}}$) can be endowed with an algebra structure by introducing
a new, suitable product.

\begin{definition}
\label{defn3}
Given a delta operator $\mathcal{Q}$ and the associated Laurent basic sequence
$\{p_{k}(x)\}_{k\in \mathbb{Z}}$, the $*_{\mathcal{Q}}$ product is defined
via the relation
\begin{equation}
p_{k}(x)*_{\mathcal{Q}} p_{j}(x):=p_{k+j}(x), \qquad k,j \in
\mathbb{Z}.
\label{starproduct}
\end{equation}
\end{definition}
\begin{remark}
\label{rem1}
An analogous product, for the case of the standard basic polynomial sequence
associated with the forward difference operator $\Delta $, was proposed
in \cite{Ward} and in \cite{ismail}. In what follows, we will use the symbol
$*$ whenever the choice of the delta operator $\mathcal{Q}$ will be obvious.
It can be shown that the space
$(\mathcal{H}, +,\cdot , *_{\mathcal{Q}})$, equipped with the usual operations
of series addition, scalar multiplication, and the $*$-product  \eqref{starproduct}, forms an associative algebra.
\end{remark}

\subsection{The natural Laurent sequence}
\label{sec3.2}

The simplest example of a Laurent basic sequence is given by
\begin{equation}
\label{nL}
p_{k}(x):= x^{k}, \qquad \qquad k\in \mathbb{Z}
\end{equation}
for the delta operator $\mathcal{Q}= D$. We shall call it the
\textit{natural Laurent basic sequence}. In order to construct other nontrivial
examples, we will introduce the notion of dual sequences of basic polynomials.

\subsection{Dual sequences}
\label{sec3.3}

We denote by $p^{+}_{k}(x):=x(x-1)\cdot \ldots \cdot (x-k+1)$,
$k\in \mathbb{N}$, the sequence of basic polynomials for the operator
$\Delta =T-1$. Similarly,
$p^{-}_{k}(x):=x(x+1)\cdot \ldots \cdot (x+k-1)$ is the basic sequence
for $\nabla =1-T^{-1}$.

We shall introduce sequences of \textit{rational basic polynomials}. To
this aim, let us observe that
\begin{eqnarray*}
\Delta \frac{1}{p^{-}_{k}(x)}&=&\frac{1}{p^{-}_{k}(x+1)}-
\frac{1}{p^{-}_{k}(x)} =-\frac{k}{p^{-}_{k+1}(x)}.
\end{eqnarray*}
Thus, if we define:
\begin{equation}
q^{+}_{k}(x):=\frac{1}{p^{-}_{k}(x)}, \quad k\in \mathbb{N}\setminus\{0\},
\qquad q^{+}_{0}=1
\label{eq3.3}
\end{equation}
we obtain
\begin{equation}
\Delta q^{+}_{k}(x)=-k\,q^{+}_{k+1}(x).
\label{eq3.4}
\end{equation}
Analogously, for the operator $\nabla $ we have
\begin{eqnarray*}
\nabla \frac{1}{p^{+}_{k}(x)}&=&\frac{1}{p^{+}_{k}(x)}-
\frac{1}{p^{+}_{k}(x-1)}=-\frac{k}{p^{+}_{k+1}(x)}.
\end{eqnarray*}
Consequently, denoting by
\begin{equation}
q^{-}_{k}(x)=\frac{1}{p^{+}_{k}(x)}, \quad k\in \mathbb{N}\setminus
\{0\}, \qquad q^{-}_{0}=1
\label{eq3.5}
\end{equation}
the rational functions associated with the basic polynomials for
$\Delta $, we obtain that
\begin{equation}
\nabla q^{-}_{k}(x)=-k\,q^{-}_{k+1}(x).
\label{eq3.6}
\end{equation}

\begin{definition}
\label{defn4}
The sequences $\{p^{+}_{k}(x)\}_{k\in \mathbb{N}}$ and
$\{q^{+}_{k}(x)\}_{k\in \mathbb{N}}$ are said to be dual for the operator
$\Delta $; similarly, $\{p^{-}_{k}(x)\}_{k\in \mathbb{N}}$ and
$\{q^{-}_{k}(x)\}_{k\in \mathbb{N}}$ are dual sequences for
$\nabla $.
\end{definition}
 As a consequence of the previous discussion, we have proved the following
\begin{proposition}
\label{prop1}
The sequence
\begin{equation}
\label{eq:LSDelta}
p_{k}(x):=
\begin{cases}
x(x-1)\cdots (x-k+1) &\qquad k\in \mathbb{N}\setminus \{0\},
\\
\dfrac{1}{x(x+1)\cdots (x-k-1)} &\qquad k \in \mathbb{Z}^{-},
\\
1 &\qquad k=0,
\end{cases}
\end{equation}
is a Laurent basic sequence for the delta operator $\Delta $.
\end{proposition}
\textbf{Open Problem}. It would be very interesting to ascertain whether
there exist infinitely many dual sequences.

\subsection{The Abel-Laurent basic sequence}
\label{sec3.4}

Consider the Abel delta operator:
\begin{equation}
\mathcal{Q}^{A}=DT^{a}
\label{eq3.8}
\end{equation}
with $D=\frac{\mathrm{d}}{\mathrm{d}x}$, $T^a f(x)=f(x+a)$, $a>0$.
The Laurent basic sequence for the Abel operator is given by
\begin{equation}
p^{A}_{k}(x):=
\begin{cases}
x(x-ak)^{k-1} &\qquad k\in \mathbb{N}\setminus \{0\},
\\
(x-ak)^{k} &\qquad k \in \mathbb{Z}^{-}\cup\{0\}.
\end{cases}
\label{eq3.9}
\end{equation}

\subsection{The symmetric Laurent basic sequence}
\label{sec3.5}

The symmetric delta operator $\Delta ^{s}:= \frac{T-T^{-1}}{2}$ is also
relevant in several applicative contexts. The basic Laurent polynomials
associated with $\Delta ^{s}$ are
\begin{equation}
p^{s}_{k}(x):=
\begin{cases}
x \prod_{j=1}^{k-1} (x + k - 2j) &\qquad k\in \mathbb{N}\setminus \{0\},
\\
1 &\qquad k = 0,
\\
\dfrac{1}{\prod_{j=0}^{-k-1} (x - k - 1 - 2j)} &\qquad k\in \mathbb{Z}^{-}.
\end{cases}
\label{eq3.10}
\end{equation}

\subsection{Lattices and Laurent basic sequences}
\label{sec3.6}

Consider the uniform lattice in $\mathbb{R}^{+}\cup \{0\}$ with sites at
the integers and the Laurent sequence  \eqref{eq:LSDelta}. Then the values
of $p_{k}(x)$ at the sites $x=n\in \mathbb{N}$, with
$k\in \mathbb{Z}$, explicitly read
\begin{eqnarray*}
p_{k}(n)=
\begin{cases}
\prod _{j=0}^{k-1}(n-j)=\dfrac{n!}{(n-k)!} & \qquad k\ge 0,
\\[15pt]
\dfrac{1}{\prod _{j=0}^{-k-1}(n+j)}=\dfrac{(n-1)!}{(n-1-k)!} &
\qquad k< 0.
\end{cases}
\end{eqnarray*}
Hereafter, we shall assume that negative factorials in the denominator
make the fraction vanish.

\section{Category theory and dynamical systems}
\label{sec4}

\subsection{Twisted Rota algebras}
\label{sec4.1}

The notion of Rota algebra has been introduced in \cite{TempestaJDE}, as
the natural Galois differential algebra where the discretization procedure
is carried out. In this section, this notion will be generalized to the
case of Laurent formal power series. We shall also develop a categorical
approach.

\begin{definition}
\label{defn5}
A twisted Rota differential algebra is a Galois differential algebra
$(\mathcal{H}, \mathcal{Q})$, where
$(\mathcal{H}, +, \cdot , *_{\mathcal{Q}})$ is an associative algebra of
Laurent formal power series, the product $*_{\mathcal{Q}}$ is the composition
law defined by \eqref{starproduct}, and $\mathcal{Q}$ is a delta operator
acting as a derivation on $\mathcal{H}$:
\begin{equation}
i) \quad \mathcal{Q}(a+b)=\mathcal{Q}(a)+\mathcal{Q}(b),
\qquad
\mathcal{Q}(\lambda a)=\lambda \mathcal{Q}(a), \quad \lambda \in
\mathbb{K},
\label{eq4.1}
\end{equation}
\begin{equation}
ii) \quad \mathcal{Q}(a*b)= \mathcal{Q}(a)*b+a*\mathcal{Q}(b).
\label{eq4.2}
\end{equation}
\end{definition}

\subsection{The generalized Rota category}
\label{sec4.2}

\begin{definition}
\label{defn6}
The generalized Rota category, denoted by $\mathcal{R}(\mathcal{H})$, is
the collection of all Rota algebras $(\mathcal{H}, \mathcal{Q})$, with
morphisms defined by
\begin{equation}
\rho _{\mathcal{Q},\mathcal{Q}'}: (\mathcal{H}, +, \cdot , *_{
\mathcal Q})\longrightarrow (\mathcal{H}, +, \cdot , *_{\mathcal{Q'}})
\label{rotamorphism}
\end{equation}
which are closed under composition.
\end{definition}
The action of the morphism $\rho _{\mathcal{Q},\mathcal{Q}'}$ on formal
power series is defined by
\begin{equation}
\sum _{k} \zeta _{k} p_{k} (x) \longrightarrow \sum _{m} \zeta _{m} q_{m}(x),
\label{morphism}
\end{equation}
where $\{p_{k}(x)\}_{k\in \mathbb{N}}$ and
$\{q_{m}(x)\}_{m\in \mathbb{N}}$ are the basic sequences associated with
$\mathcal{Q}$ and $\mathcal{Q}'$ respectively. The property of closure
under composition is trivial.

\subsection{New categories for dynamical systems}
\label{sec4.3}

We introduce two new definitions of categories for differential equations
of the form \eqref{lincont} and  \eqref{nonlincont}. For
$N\in \mathbb{N}\setminus \{0\}$, define
$u^{*N}:=\underbrace{u*\ldots *u}_{N-times}$.
\begin{definition}
\label{defn7}
For any choice of the set of functions
$\{a_{0}(x),\ldots ,a_{N}(x), g(x)\}$, the category
$\mathcal{K}_{\{ a_{0},\ldots , a_{N},g\}}$ of linear dynamical systems
of order $N$ is the collection of all equations of the form
\begin{equation}
lin(\mathcal{Q},u,*_{\mathcal{Q}}):=a_{N}(x)* \mathcal{Q}^{N} y + a_{N-1}(x)*
\mathcal{Q}^{(N-1)}y +\ldots + a_{1}(x)* \mathcal{Q} y+a_{0}(x)* y +g(x)=0.
\label{linabstr}
\end{equation}
 The set of correspondences
\begin{equation}
\lambda _{\mathcal{Q},\mathcal{Q}'}: \mathcal{K}_{\{ a_{0},\ldots , a_{N},g
\}} \longrightarrow \mathcal{K}_{\{ a_{0},\ldots , a_{N},g \}},
\label{eq4.6}
\end{equation}
\begin{equation}
lin(\mathcal{Q},y,*) \longrightarrow lin(\mathcal{Q}',y,*')
\label{morfeq},
\end{equation}
defines the class of morphisms of the category.
\end{definition}
In other words, given a differential equation of the form  \eqref{lincont}, one may define a category whose objects are the equation  \eqref{lincont} together with all infinitely many of its discretizations,
obtained by varying $\mathcal{Q}\in \mathcal{D}$ and constructing the corresponding
morphisms  \eqref{morfeq}.
\begin{definition}
\label{defn8}
For any choice of the set of functions
$\{c_{0}(x),\ldots ,c_{N}(x)\}$, the category
$\mathcal{N}_{\{m; c_{0},\ldots , c_{N}\}}$ of nonlinear dynamical systems
of order $m\in \mathbb{N}$ consists of all equations of the form
\begin{equation}
eq(\mathcal{Q},y,*_{\mathcal{Q}}):= \mathcal{Q}^{m} y- c_{N}(x)*y^{*N}-c_{N-1}(x)*y^{*(N-1)}-
\ldots -c_{1}(x)*y-c_{0}(x)=0.
\label{nonlinabstr}
\end{equation}
The set of correspondences
\begin{equation}
\nu _{\mathcal{Q},\mathcal{Q}'}: \mathcal{N}_{\{m; c_{0},\ldots , c_{N}
\}} \longrightarrow \mathcal{N}_{\{m; c_{0},\ldots , c_{N}\}},
\label{eq4.9}
\end{equation}
\begin{equation}
eq(\mathcal{Q},y,*) \longrightarrow eq(\mathcal{Q}',y,*')
\label{morfeqnlin},
\end{equation}
defines the class of morphisms of the category.
\end{definition}
 The closure of the morphisms $\lambda _{\mathcal{Q},\mathcal{Q}'}$ and
$\nu _{\mathcal{Q},\mathcal{Q}'}$ under composition is easily verified.
\begin{definition}
\label{defn9}
Two objects in the category
$\mathcal{K}_{\{ a_{0},\ldots , a_{N},k\}}$ (or in
$\mathcal{N}_{\{m; c_{0},\ldots , c_{N}\}}$) are said to represent two
categorically equivalent equations. Alternatively, two equations are categorically
equivalent if there exists a morphism of categories
$\lambda _{\mathcal{Q},\mathcal{Q}'}$ (or
$\nu _{\mathcal{Q},\mathcal{Q}'}$) mapping one equation to the other.
\end{definition}
A notable property of the categories introduced above is their hierarchical
structure via subcategories, with inclusions determined by the natural
identifications:
\begin{eqnarray*}
\mathcal{K}_{\{ a_{0},\ldots , a_{k}\}}:= \mathcal{K}_{\{ a_{0},
\ldots , a_{k},\underbrace{0,\ldots ,0}_{(N-k)-\mathrm{times}}\}},
\\
\mathcal{N}_{\{m; c_{0},\ldots , c_{k}\}}:= \mathcal{N}_{\{ m; c_{0},
\ldots , c_{k},\underbrace{0,\ldots ,0}_{(N-k)-\mathrm{times}}\}},
\end{eqnarray*}
for $k<N$.
\begin{lemma}
\label{lem1}
For any fixed $N\in \mathbb{N}$, and for any choice of the polynomials
$\{ a_{0},\ldots , a_{N},g\}$, there exists a filtration of subcategories,
given by the finite sequences
\begin{equation}
\mathcal{K}_{\{ a_{0}\}}\subset \mathcal{K}_{\{ a_{0},a_{1}\}}
\subset \ldots \subset \mathcal{K}_{\{ a_{0},\ldots , a_{k}}\}
\subset \ldots \subset \mathcal{K}_{\{ a_{0},\ldots , a_{N},g\}}
\label{filtrF}
\end{equation}
and
\begin{equation}
\mathcal{N}_{\{m; c_{0}\}}\subset \mathcal{N}_{\{m; c_{0},c_{1}\}}
\subset \ldots \subset \mathcal{N}_{\{m; c_{0},\ldots , c_{k}}\}
\subset \ldots \subset \mathcal{N}_{\{m; c_{0},\ldots , c_{N}\}},
\label{filtrG}
\end{equation}
respectively.
\end{lemma}
\begin{proof}
For each of the subsets $\mathcal{K}_{\{ a_{0},\ldots , a_{k}}\}$, the
restriction of the morphisms $\lambda _{\mathcal{Q},\mathcal{Q}'}$ over
these subsets still preserves the closure under composition and defines
morphisms of subcategories. The same argument holds for the sequence related
to $\mathcal{N}_{\{m;c_{0},\ldots , c_{N}\}}$.
\end{proof}
As a consequence of the previous construction, one can define functors
between the category of Rota differential algebras and the categories of
abstract dynamical systems defined above.
\begin{theorem}
\label{functor}
For any choice of the functions
$\{ a_{0}(x),\ldots , a_{N}(x),g(x)\}$, the application
\begin{equation}
F: \mathcal{R}(\mathcal{H}) \longrightarrow \mathcal{K}_{\{ a_{0},
\ldots , a_{N},g \}},
\label{functorF}
\end{equation}
\begin{equation*}
(\mathcal{H},+, \cdot , *_{\mathcal{Q}}) \longrightarrow lin(
\mathcal{Q},y,*),
\end{equation*}
\begin{equation*}
\rho _{\mathcal{Q},\mathcal{Q}'} \longrightarrow \lambda _{
\mathcal{Q},\mathcal{Q}'},
\end{equation*}
is a covariant functor.
\end{theorem}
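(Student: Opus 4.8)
The plan is to verify the three defining properties of a covariant functor in turn: that $F$ is well-defined on objects, that it carries morphisms to morphisms with the correct source and target, and that it preserves identities and composition. Throughout I would exploit the fact that both the domain and codomain have morphisms indexed by \emph{pairs} of delta operators, so that functoriality should reduce to matching two parallel indexing schemes.

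First I would settle well-definedness on objects. By the uniqueness statement recalled after the definition of Rota algebra (from \cite{TempestaJDE}), each object $(\mathcal{F},+,\cdot,*_{\mathcal{Q}})$ of $\mathcal{R}(\mathcal{F})$ is determined by, and determines, a single delta operator $\mathcal{Q}$. For the fixed data $\{a_{0},\ldots,a_{N},c_{0}\}$, this $\mathcal{Q}$ produces exactly one equation $lin(\mathcal{Q},y,*_{\mathcal{Q}})$ of the form \eqref{linabstr}, which is by construction an object of $\mathcal{K}_{\{a_{0},\ldots,a_{N},c_{0}\}}$. Hence the object assignment is unambiguous. Next I would check the morphism assignment: a morphism $\rho_{\mathcal{Q},\mathcal{Q}'}$ is sent to $\lambda_{\mathcal{Q},\mathcal{Q}'}$, and by the definition \eqref{morfeq} the latter carries $lin(\mathcal{Q},y,*)$ to $lin(\mathcal{Q}',y,*')$. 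Its source is therefore $F$ applied to the source of $\rho_{\mathcal{Q},\mathcal{Q}'}$, and likewise for the target, so source and target are compatible with the object map.

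It then remains to verify the two functoriality axioms. For identities, the identity on $(\mathcal{F},*_{\mathcal{Q}})$ is $\rho_{\mathcal{Q},\mathcal{Q}}$, which by its action \eqref{morphism} fixes every expansion $\sum_{n}a_{n}p_{n}(x)$ and is thus the identity; correspondingly $\lambda_{\mathcal{Q},\mathcal{Q}}$ fixes $lin(\mathcal{Q},y,*)$, giving $F(\rho_{\mathcal{Q},\mathcal{Q}})=\mathrm{id}_{F(\mathcal{F},*_{\mathcal{Q}})}$. For composition, I would observe that the $\rho$-morphisms compose by concatenation of indices: from \eqref{morphism}, the coefficients $a_{n}$ are transported unchanged while only the basic sequence is replaced, whence $\rho_{\mathcal{Q}',\mathcal{Q}''}\circ\rho_{\mathcal{Q},\mathcal{Q}'}=\rho_{\mathcal{Q},\mathcal{Q}''}$. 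The $\lambda$-morphisms satisfy the identical law $\lambda_{\mathcal{Q}',\mathcal{Q}''}\circ\lambda_{\mathcal{Q},\mathcal{Q}'}=\lambda_{\mathcal{Q},\mathcal{Q}''}$, since both sides carry $lin(\mathcal{Q},y,*)$ to $lin(\mathcal{Q}'',y,*'')$. Combining the two gives $F(\rho_{\mathcal{Q}',\mathcal{Q}''}\circ\rho_{\mathcal{Q},\mathcal{Q}'})=F(\rho_{\mathcal{Q}',\mathcal{Q}''})\circ F(\rho_{\mathcal{Q},\mathcal{Q}'})$, establishing covariance.

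I expect the only genuinely delicate point to be the well-definedness of the object map, which rests entirely on the cited uniqueness of the Rota algebra attached to a prescribed delta operator; without that, a single object of $\mathcal{K}$ could fail to be singled out. Once this is secured, the remaining steps are immediate, because the families $\rho_{\mathcal{Q},\mathcal{Q}'}$ and $\lambda_{\mathcal{Q},\mathcal{Q}'}$ share the same index-concatenation composition rule, so the verification of the functor axioms is essentially bookkeeping on the delta-operator labels rather than any substantive computation.
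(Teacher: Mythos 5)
Your proposal is correct and follows essentially the same route as the paper's own proof, which simply asserts by direct verification that $F$ preserves composition and identities; you spell out the same bookkeeping via the index-concatenation rule $\rho_{\mathcal{Q}',\mathcal{Q}''}\circ\rho_{\mathcal{Q},\mathcal{Q}'}=\rho_{\mathcal{Q},\mathcal{Q}''}$ and its $\lambda$-counterpart. Your additional remark on well-definedness of the object map, resting on the uniqueness of the Rota algebra attached to a delta operator, is a sensible supplement that the paper leaves implicit.
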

\begin{proof}
A direct verification shows that $F$ preserves the composition of morphisms:
\begin{equation*}
F(\rho _{\mathcal{Q}'',\mathcal{Q}'}\circ \rho _{\mathcal{Q}',
\mathcal{Q}})=F(\rho _{\mathcal{Q}'',\mathcal{Q}'})\circ F(\rho _{
\mathcal{Q}',\mathcal{Q}}).
\end{equation*}
If we denote by $id_{\mathcal{Q}}:=\rho _{\mathcal{Q},\mathcal{Q}}$ the
identity morphism, we also have
\begin{equation*}
F(id_{\mathcal{Q}}(\mathcal{A}))=id_{\mathcal{Q}}(F(\mathcal{A})),
\end{equation*}
where $\mathcal{A}\in \mathcal{R}(\mathcal{F})$.
\end{proof}
 In the same manner one can prove that the application
\begin{equation}
G: \mathcal{R}(\mathcal{H}) \longrightarrow \mathcal{N}_{\{ m; c_{0},
\ldots , c_{N} \}},
\label{functorG}
\end{equation}
\begin{equation*}
(\mathcal{H},+, \cdot , *_{\mathcal{Q}}) \longrightarrow eq(
\mathcal{Q},y,*),
\end{equation*}
\begin{equation*}
\rho _{\mathcal{Q},\mathcal{Q}'} \longrightarrow \nu _{\mathcal{Q},
\mathcal{Q}'},
\end{equation*}
is a covariant functor.

The functors  \eqref{functorF} and  \eqref{functorG} capture the essential
aspects of the proposed discretization, and provide
\textit{functorial Rota correspondences} between continuous and discrete
dynamical systems. Moreover, these functors can be naturally extended to
the subcategories introduced above.
\begin{corollary}
\label{cor2}
The restriction of the functors $F$ and $G$ to the subcategories defined
by the filtrations  \eqref{filtrF} and  \eqref{filtrG} respectively remain
covariant functors.
\end{corollary}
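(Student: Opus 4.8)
The plan is to reduce the statement entirely to Theorem~\ref{functor}, exploiting the fact that the latter is proved \emph{uniformly in the coefficient functions}. First I would record the essential point: the proof of Theorem~\ref{functor} establishes covariance of $F$ by verifying the two identities $F(\rho_{\mathcal{Q}'',\mathcal{Q}'}\circ\rho_{\mathcal{Q}',\mathcal{Q}})=F(\rho_{\mathcal{Q}'',\mathcal{Q}'})\circ F(\rho_{\mathcal{Q}',\mathcal{Q}})$ and $F(id_{\mathcal{Q}})=id_{\mathcal{Q}}$, and these verifications invoke only the assignment $\rho_{\mathcal{Q},\mathcal{Q}'}\mapsto\lambda_{\mathcal{Q},\mathcal{Q}'}$ together with the closure of the $\lambda$'s under composition. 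Crucially, no property of the particular coefficients $\{a_0(x),\ldots,a_N(x),c_0(x)\}$ is ever used.

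Next I would observe that every subcategory occurring in the filtration \eqref{filtrF} is, by the zero-padding identification preceding the Lemma that yields \eqref{filtrF} and \eqref{filtrG}, itself a category of the form $\mathcal{K}_{\{a_0,\ldots,a_k\}}=\mathcal{K}_{\{a_0,\ldots,a_k,0,\ldots,0\}}$. Theorem~\ref{functor} therefore applies \emph{verbatim} to this specialized coefficient datum and yields a covariant functor
\beq
F_k:\mathcal{R}(\mathcal{F})\longrightarrow\mathcal{K}_{\{a_0,\ldots,a_k\}},\qquad (\mathcal{F},+,\cdot,*_{\mathcal{Q}})\longmapsto\big(\textstyle\sum_{j=0}^{k}a_j*\mathcal{Q}^{j}y=0\big),\qquad \rho_{\mathcal{Q},\mathcal{Q}'}\longmapsto\lambda_{\mathcal{Q},\mathcal{Q}'}.
\eeq
This $F_k$ is the functor $F$ built on the truncated datum, whose codomain is the subcategory $\mathcal{K}_{\{a_0,\ldots,a_k\}}$; it is what the statement refers to as the restriction of $F$ to the $k$-th member of the filtration, and by the first paragraph it is covariant for every $k$.

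It then remains to note that these restrictions are mutually coherent along the filtration. Coherence holds on morphisms because every $F_k$ uses the single assignment $\rho_{\mathcal{Q},\mathcal{Q}'}\mapsto\lambda_{\mathcal{Q},\mathcal{Q}'}$, and the Lemma already guarantees that the $\lambda_{\mathcal{Q},\mathcal{Q}'}$ of the larger category restrict to morphisms of each subcategory while preserving composition; on objects it holds because the zero-padding identification is precisely the defining inclusion of the filtration. The nonlinear case is identical: replacing $F$, $lin$, $\lambda$, $\mathcal{K}$ by $G$, $eq$, $\nu$, $\mathcal{N}$ and invoking the $G$-analogue of Theorem~\ref{functor} together with \eqref{filtrG} and \eqref{nonlinabstr} produces covariant restrictions $G_k:\mathcal{R}(\mathcal{F})\to\mathcal{N}_{\{m;a_0,\ldots,a_k\}}$.

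I expect no genuine obstacle here. The entire content is that truncating coefficients to zero commutes with the functorial assignments, which is automatic because those assignments act only on the delta operator $\mathcal{Q}$ and never on the coefficient functions. The single point deserving a line of care is the well-definedness of $\lambda_{\mathcal{Q},\mathcal{Q}'}$ on the truncated objects; but since $\lambda_{\mathcal{Q},\mathcal{Q}'}$ replaces the $\mathcal{Q}$-data by the $\mathcal{Q}'$-data uniformly across all terms of \eqref{linabstr}, it sends vanishing terms to vanishing terms, so each restriction is automatically well defined and the proof reduces to bookkeeping.
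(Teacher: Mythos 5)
Your argument is correct and matches what the paper intends: the corollary is stated there without proof, as an immediate consequence of Theorem~\ref{functor} (whose verification of functoriality never uses the coefficient functions) together with the Lemma establishing that the morphisms $\lambda_{\mathcal{Q},\mathcal{Q}'}$ and $\nu_{\mathcal{Q},\mathcal{Q}'}$ restrict to the subcategories of the filtrations while remaining closed under composition. Your write-up simply makes explicit, via the zero-padding identification, the bookkeeping the authors leave implicit, so there is nothing to correct.
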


\section{Main theorems for the regular case}
\label{sec:5}

In this section, we present the main results of the article concerning
the discretization of ODEs admitting analytical solutions. Precisely, using
the new categorical framework developed above, we construct a class of
integrable maps associated with the dynamical systems \eqref{lincont} and  \eqref{nonlincont}, respectively. These maps are categorically equivalent
to the original continuous systems and represent them in the Galois algebra
corresponding to the forward difference operator. The discretization is
performed on an \emph{equally spaced} mesh.

Theorem~\ref{main1} establishes a solution to the problem of the integrability-preserving
discretization of linear $n$-th order ODEs with analytic variable coefficients.
 Theorem~\ref{main2} solves the analogous problem for the case of nonlinear
first-order ODEs.

\begin{remark}
In both Theorems~\ref{main1} and \ref{main2}, we will define our maps in
the Rota algebra $(\mathcal{F}, \mathcal{Q})$, where $\mathcal{F}$ is the
space of formal power series, and $p_{k}(x)_{k\in \mathbb{N}}$ are basic
sequences for the delta operator $Q=\Delta $. In section~\ref{sec:8}, when
dealing with singular solutions, this picture will be extended to the case
of Laurent formal series, and Laurent basic polynomials.
\end{remark}

\subsection{Integrable maps from linear ODEs}
\label{sec5.1}

\begin{theorem}
\label{main1}
 Consider the differential equation
\begin{equation}
lin(\partial , y):=a_{N}(x) \frac{d^{N}}{dx^{N}}y + a_{N-1}(x)
\frac{d^{N-1}}{dx^{N-1}}y +\ldots +a_{1}(x) \frac{d}{dx} y+a_{0}(x) y +g(x)=0,
\label{lincont2}
\end{equation}
where $a_{i}(x)$, $i=0,\ldots ,N$ and $g(x)$ are real analytic functions at
$x=0$ (with $a_{N}(0)\neq 0)$:
\begin{equation}
a_{i}(x)=\sum _{k_{i}=0}^{\infty}\frac{\alpha _{ik_{i}}}{k_{i}!}x^{k_{i}},
\qquad g(x)=\sum _{\ell =0}^{\infty}\frac{\gamma _{\ell}}{\ell !}x^{
\ell}.
\label{eq5.2}
\end{equation}
Assume that
\begin{equation}
\label{eq:23}
y(x)=\sum _{k=0}^{\infty} \zeta _{k} x^{k}
\end{equation}
be a real solution of  \eqref{lincont2}, analytic at $x=0$. Then the difference
equation
\begin{equation}
\label{eqlin}
\sum _{i=0}^{N} \sum _{k_{i}=0}^{n}
\frac{n!\alpha _{ik_{i}}}{(n-k_{i})!k_{i}!}\Delta ^{i}u_{n-k_{i}} +
\sum _{\ell =0}^{n} \frac{n! \gamma _{\ell}} {\ell ! (n-\ell )!} =0,
\end{equation}
or, in the equivalent form, the equation
\begin{equation}
\label{eqlin2}
\sum _{i=0}^{N} \sum _{k_{i}=0}^{n} \sum _{k=0}^{i} (-1)^{i-k}
\binom{n}{k_{i}}\binom{i}{k}\alpha _{ik_{i}}u_{n-k_{i}+k} +\sum _{
\ell =0}^{n} \frac{n! \gamma _{\ell}} {\ell ! (n-\ell )!} =0,
\end{equation}
defined on a regular set of points $\mathcal{L}\subset \mathbb{R}$ indexed
by the variable $n\in \mathbb{N}$, admits as a solution the series
\begin{equation}
u_{n}=\sum _{k=0}^{n} \zeta _{k} \frac{n!}{(n-k)!}.
\label{part1}
\end{equation}
\end{theorem}

\begin{proof}
We start by applying the morphism  \eqref{morphism} to the series  \eqref{eq:23}; we obtain the transformation
\begin{equation}
\sum _{k=0}^{\infty} \zeta _{k} x^{k} \longrightarrow \sum _{k=0}^{
\infty} \zeta _{k} p_{k} (x).
\hspace{10mm}
\label{interpol}
\end{equation}
The discrete transform \eqref{interpol} is finite whenever
$x\in \mathcal{L}$, provided that the points of the mesh
$\mathcal{L}$ coincide with the zeros of the basic sequence of polynomials
$\{p_{k}(x)\}_{k\in \mathbb{N}}$. In the following, we choose our basic
sequence to be the set of lower factorial polynomials. Consequently,
$\mathcal{L}$ denotes an equally spaced set of points on the real nonnegative
axis, indexed by $n\in \mathbb{N}$. Thus, we have
\begin{equation}
p_{k}(n)=
\begin{cases}
0, & \mathrm{if} \quad n<k,
\\[5pt]
\dfrac{n!}{(n-k)!}, & \mathrm{if}\quad n \geq k.
\label{pol}
\end{cases}
\end{equation}
 Let us introduce the function $u:\mathbb{N}\rightarrow \mathbb{R}$ defined
by
\begin{equation}
\label{eq:3.9}
u_{n}= \sum _{k=0}^{n} \frac{n!}{(n-k)!} \zeta _{k}.
\end{equation}
For its \textit{inverse interpolating transform}, we have
\begin{equation}
\label{invtr}
\zeta _{k}= \sum _{l=0}^{k}\frac{ (-1)^{k-l} }{ l!(k-l)!}u_{l}.
\end{equation}
For the purpose of constructing the equation categorically equivalent to
eq.~\eqref{lincont2} in $\mathcal{K}_{\{ a_{0},\ldots , a_{N},g\}}$, we
represent the product
$x^{r}\frac{\mathrm{d}^{s}}{\mathrm{d}x^{s}}y(x)$ on the set
$\mathcal{L}$ by the action of the morphism
$\rho _{\partial ,\Delta}: (\mathcal{F},+,\cdot )\rightarrow (
\mathcal{F},+,*_{\Delta})$ between the standard (Rota) algebra of power
series endowed with the pointwise product and that equipped with the forward
difference operator. This morphism determines the general (umbral) correspondences
\begin{equation}
x^{r}\frac{\mathrm{d}^{s}}{\mathrm{d}x^{s}}y(x)\longrightarrow
\frac{n!}{(n-r)!}\Delta ^{s} u_{n-r}, \qquad r,s\in \mathbb{N}
\label{eq5.11}
\end{equation}
which can be specialized to the following cases:
\begin{align}
y(x)&\longrightarrow u_{n},\\
x^{r} y(x)&\longrightarrow \frac{n!}{(n-r )!} u_{n-r},\\
\frac{\mathrm{d}^{s}}{\mathrm{d}x^{s}}y(x)&\longrightarrow \Delta ^{s}
u_{n}.
\end{align}
To prove these relations, observe that
\begin{equation}
x^{r}\frac{\mathrm{d}^{s}}{\mathrm{d}x^{s}}y(x)\longrightarrow p_{r}(n)*
\sum _{k=0}^{n+s} \frac{k!}{(k-s)!}\,\zeta _{k}\,p_{k-s}(n)=\sum _{k=0}^{n+s-r}
\frac{k!}{(k-s)!}\,\zeta _{k}\,p_{k-s+r}(n).
\label{eq5.15}
\end{equation}
Then, we deduce that
\begin{eqnarray}
&&\sum _{k=0}^{n+s-r} \frac{k!}{(k-s)!}\,\zeta _{k}\,p_{k-s+r}(n)=
\sum _{k=0}^{n+s-r} \frac{n!\,k!}{(k-s)!(n-k+s-r)!}\zeta _{k}
\label{eq5.16}
\\
&=&\sum _{k=0}^{n+s-r}\sum _{j=0}^{k}
\frac{n!\,k!}{(k-s)!(n-r-(k-s))!}\frac{(-1)^{k-j}}{j!(k-j)!}u_{j}
\nonumber
\\
&=&\frac{n!}{(n-r)!}\sum _{j=0}^{n+s-r} \left (\sum _{k=j}^{n+s-r} (-1)^{k-j}
\binom{n-r}{k-s}\binom{k}{j}\right )u_{j}.
\nonumber
\end{eqnarray}
It can be shown that (see the Appendix~\ref{AppendixA}):
\begin{equation}
\sum _{k=j}^{n+s-r} (-1)^{k-j} \binom{n-r}{k-s}\binom{k}{j}=\sum _{i=0}^{s}(-1)^{s-i}
\binom{s}{i}\delta _{n-r+i,j}.
\label{eq5.17}
\end{equation}
Thus,
\begin{eqnarray}
&& x^{r}\frac{\mathrm{d}^{s}}{\mathrm{d}x^{s}}y(x)\longrightarrow
\frac{n!}{(n-r)!}\sum _{j=0}^{n+s-r} \left (\sum _{i=0}^{s}(-1)^{s-i}
\binom{s}{i}\delta _{n-r+i,j}\right )u_{j}
\label{eq5.18}
\\
&=&\frac{n!}{(n-r)!}\sum _{i=0}^{s}(-1)^{s-i}\binom{s}{i}u_{n-r+i}  =
\frac{n!}{(n-r)!}\Delta ^{s}u_{n-r}.
\nonumber
\end{eqnarray}

By using the previous formulas, we deduce that  eq.~\eqref{eqlin} is categorically
equivalent to eq.~\eqref{lincont2}. Both in turn are representations of
the abstract equation  \eqref{linabstr}.

To conclude the proof, observe that, by means of the action of the functor
$F$, any formal power series $y$ of eq.~\eqref{lincont2} is carried into
a solution $u_{n}$ of the equation
\begin{equation*}
lin(\mathcal{Q},u,*_{\mathcal{Q}})=\lambda _{\partial ,\Delta}(lin(
\partial ,y,\cdot )).
\end{equation*}
In addition, on the mesh $\mathcal{L}$, the sum
$\sum _{k} \zeta _{k} p_{k}(n)$ truncates and converts into the finite
sum  \eqref{part1}.
\end{proof}

\begin{remark}
\label{rem10}
Theorem~\ref{main1} can be considerably generalized by choosing an arbitrary
different delta operator $\mathcal{Q}$ and the corresponding basic sequence
$\{p_{k}(x)\}_{k\in \mathbb{N}}$. In this case, we define the lattice
$\mathcal{L}$ to be the union set of all zeroes of the polynomials of the
sequence $\{p_{k}(x)\}_{k\in \mathbb{N}}$. The action of the morphism
$\lambda _{\partial ,\mathcal{Q}}$ will provide a different, categorically
equivalent representation of eq.~\eqref{lincont2} in
$\mathcal{K}_{\{ a_{0},\ldots , a_{N},g\}}$.
\end{remark}

\subsection{Integrable maps from nonlinear ODEs}
\label{sec5.2}

\begin{theorem}
\label{main2}
 Consider a dynamical system of the form
\begin{equation}
eq(\partial , y):=\frac{d^{m}}{dx^{m}}y-\sum _{r=0}^{N} c_{r}(x) y^{r}=0,
\label{ncont2}
\end{equation}
where $N, m\in \mathbb{N}\setminus \{0\}$,
$c_{l}(x)= \sum _{k_{l}=0}^{\infty}\beta _{l k_{l}} x^{k_{l}}$ are real analytic
at $x=0$, with $\beta _{l k_{l}}\in \mathbb{R}$, $l=0,\ldots ,N$. Assume
that
\begin{equation}
y(x)=\sum _{k=0}^{\infty} \zeta _{k} x^{k}
\label{eq5.20}
\end{equation}
be a real solution of \eqref{ncont2}, analytic at $x=0$. Then the difference
equation
\begin{equation}
\sum _{i=0}^{m}(-1)^{m-i}\binom{m}{i}u_{n+i}-\sum _{r=0}^{N} \sum _{s_{r},j_{1},
\ldots ,j_{r}=0}^{n}\beta _{rs_{r}}
\frac{(1-r)^{n-s_{r}-j_{1}-\ldots -j_{r} } n!}{ (n-s_{r}-j_{1}-\ldots -j_{r})! }
\prod _{l=1}^{r}\frac{u_{j_{l}}}{j_{l}!}=0,
\label{eqnonlin}
\end{equation}
which represents eq.\eqref{ncont2} on a regular set of points
$\mathcal{L}\subset \mathbb{R}$ indexed by the variable
$n\in \mathbb{N}$, admits as a solution the series
\begin{equation}
u_{n}=\sum _{k=0}^{n} \zeta _{k} \frac{n!}{(n-k)!}.
\label{part}
\end{equation}
\end{theorem}

\begin{proof}
We consider the correspondence  
\begin{equation}
y^{(r)}(x)\to \Delta ^{r} u_{n}=\sum _{j=0}^{r}(-1)^{r-j}\binom{r}{j}u_{n+j},
\label{eq5.23}
\end{equation}
where $u_{n}$ is related to its inverse transform by the reciprocal eqs.
 \eqref{eq:3.9} and  \eqref{invtr}. We will also determine how the morphism
$\rho _{\partial ,\Delta}: (\mathcal{F},+,\cdot )\rightarrow (
\mathcal{F},+,*_{\Delta})$ acts on the powers of $y(x)$. Let
$\mathbf{k}$, $\mathbf{j}$ be the multiindices:
$\mathbf{k}=(k_{1},\ldots ,k_{r})$,
$\mathbf{j}=(j_{1},\ldots ,j_{r})$, $k=\sum _{i=1}^{r} k_{i}$,
$j=\sum _{i=1}^{r} j_{i}$ (see the Appendix~\ref{AppendixB} for a detailed
proof of eqs.  \eqref{eq:3.24} and  \eqref{eq:3.25}):
\begin{eqnarray}
\label{eq:3.24}
&& y(x)^{r}=\left (\sum _{m=0}^{\infty} \zeta _{m} x^{m}\right )^{r}=
\sum _{\mathbf{k}=0}^{\infty} \zeta _{k_{1}}\cdots \zeta _{k_{r}} x^{k}
\to \sum _{\mathbf{k} =0}^{\infty} \zeta _{k_{1}}\cdots \zeta _{k_{r}}
p_{k}(x)
\\
\nonumber
&\to & \sum _{\mathbf{k}=0}^{\infty} \frac{n!}{(n-k)!}\prod _{i=1}^{r}
\left (\sum _{j_{i}=0}^{k_{i}}
\frac{(-1)^{k_{i}-j_{i}}}{j_{i}!(k_{i}-j_{i})!}u_{j_{i}}\right ) =
\sum _{\mathbf{j}=0,\,j\le n} \frac{(1-r)^{n -j }n!}{(n-j)! }\prod _{i=1}^{r}
\frac{u_{j_{i}}}{j_{i}!}.
\end{eqnarray}
As usual, factorials of negative numbers in the denominators yield vanishing
fractions. Then, the last series in eq.\eqref{eq:3.24} have a finite number
of terms only.

Also, we observe that the products of the form $x^{s} y^{r}$ are mapped
into the terms
\begin{eqnarray}
\label{eq:3.25}
&& x^{s}y^{r}=x^{s}\left (\sum _{m=0}^{\infty}\zeta _{m} x^{m}\right )^{r}=
\sum _{\mathbf{k}=0}^{\infty} \zeta _{k_{1}}\cdots \zeta _{k_{r}} x^{k+s}
\sum _{\mathbf{k}=0}^{\infty} \zeta _{k_{1}}\cdots \zeta _{k_{r}} p_{k+s}(x)
\\
&\to & \sum _{\mathbf{k}=0}^{\infty} \frac{n!}{(n-s-k)!}\prod _{i=1}^{r}
\left (\sum _{j_{i}=0}^{k_{i}}
\frac{(-1)^{k_{i}-j_{i}}}{j_{i}!(k_{i}-j_{i})!}u_{j_{i}}\right )
\nonumber
\\
& = & \sum _{\mathbf{j}=0,\,j\le n-s}
\frac{(1-r)^{n-s -j } n!}{ (n-s-j)! }\prod _{i=1}^{r}
\frac{u_{j_{i}}}{j_{i}!}.
\nonumber
\end{eqnarray}
The product of a function
$c(x)=\sum _{s=0}^{\infty} \alpha _{s}x^{s}$ times a power of $y$ is mapped
into:
\begin{eqnarray}
c(x)y^{r}&=&\left (\sum _{s=0}^{\infty}\beta _{s} x^{s}\right )
\left (\sum _{m=0}^{\infty}\zeta _{m} x^{m}\right )^{r}=\left (\sum _{s=0}^{
\infty}\beta _{s}x^{s}\right ) \left (\sum _{\mathbf{k}=0}^{\infty}
\zeta _{k_{1}}\cdots \zeta _{k_{r}} x^{k}\right )
\nonumber
\\
&= &\sum _{s,\mathbf{k}=0}^{\infty} \beta _{s} \zeta _{k_{1}}\cdots
\zeta _{k_{r}} x^{k+s} \to \sum _{s,\mathbf{k}=0}^{\infty} \beta _{s}
\zeta _{k_{1}}\cdots \zeta _{k_{r}} p_{k+s}(x)
\nonumber
\\
&\to & \sum _{s,\mathbf{k}=0}^{\infty} \beta _{s} \zeta _{k_{1}}
\cdots \zeta _{k_{r}} \frac{n!}{(n-k-s)!} = \sum _{s,\mathbf{j}=0}^{
\infty}\beta _{s}\frac{(1-r)^{n-s -j } n!}{ (n-s-j)! }\prod _{i=1}^{r}
\frac{u_{j_{i}}}{j_{i}!}.
\label{eq5.26}
\end{eqnarray}

By combining together all the previous results, we obtain the proof that
the difference equation  \eqref{eqnonlin} is categorically equivalent to
the equation
\begin{equation}
\sum _{j=0}^{m}(-1)^{m-j}\binom{m}{j}u_{n+j}-\sum _{q=0}^{N} \sum _{s_{q},
\,\mathbf{j}=0}^{\infty}\beta _{qs_{q}}
\frac{(1-r)^{n-s_{q}-j } n!}{ (n-s_{q}-j)! }\prod _{i=1}^{q}
\frac{u_{j_{i}}}{j_{i}!}=0
\label{eq5.27}
\end{equation}
defined on $\mathcal{L}$, i.e. is the image of eq.\eqref{ncont2} under
the action of the morphism $\nu _{\partial ,\Delta}$ defined in
$\mathcal{N}_{\{m; a_{0},\ldots , a_{N}\}}$. In turn, both are realizations
of  eq.~\eqref{nonlinabstr}.

To prove that the series \eqref{part} is solution of eq.~\eqref{eqnonlin}, observe that the morphism
$\rho _{\partial ,\Delta}$ provides the correspondence
\begin{equation}
\sum _{k} \zeta _{k} x^{k} \longrightarrow \sum _{k} \zeta _{k} p_{k}(n).
\label{eq5.28}
\end{equation}

The action of the functor $G$ will carry any analytic solution $y$ of eq.~\eqref{ncont2},
defined on the algebra $(\mathcal{F}+,\cdot )$, into a solution
$u_{n}$ of the corresponding equation
\begin{equation*}
eq(\Delta ,u,*_{\Delta})=\nu _{\partial ,\Delta}(eq(\partial , y,
\cdot )),
\end{equation*}
defined on the Rota algebra $(\mathcal{F},+,*_{\Delta})$. Once represented
this equation on the lattice $\mathcal{L}$, it reduces to eq.~\eqref{eqnonlin}. Besides, the series expansion of the solution
$u_{n}$ truncates and converts into the finite sum  \eqref{part}.
\end{proof}

The same scheme of Remark~\ref{rem10} can be applied to generalize  Theorem~\ref{main2} to arbitrary objects of
$\mathcal{N}_{\{m; c_{0},\ldots , c_{N}\}}$.

\begin{remark}
\label{rem4}
A direct comparison between the main results of the linear and nonlinear
cases can be easily obtained. By putting $a_{N}(x)=1$, and
$a_{1}(x)=\ldots = a_{n-1}(x)=0$ into  eq.~\eqref{lincont2}, it converts
into  eq.~\eqref{ncont2} for $c_{2}(x)=\ldots =c_{N}(x)=0$. We observe that,
coherently,  eq.~\eqref{eqlin2} converts into  eq.~\eqref{eqnonlin}.
\end{remark}

\section{Integrable dynamics in the Fourier space}
\label{sec:6}

The integrable maps arising from the preceding construction induce an auxiliary
dynamics on the space of their Fourier coefficients, which is of independent
interest. We consider here the case of nonlinear equations with constant
coefficients, which provides a prototypical illustration of this alternative
construction.
\begin{proposition}
\label{prop2}
 Consider a dynamical system of the form
\begin{equation}
\frac{d^{m}}{dx^{m}}y= c_{N} y^{N}+c_{N-1} y^{N-1}+\ldots +c_{1} y+ c_{0},
\label{ncont3}
\end{equation}
where $c_{0},\ldots ,c_{N} \in \mathbb{R}$. Assume that
\begin{equation}
y=\sum _{k=0}^{\infty} \zeta _{k} x^{k}
\label{eq6.2}
\end{equation}
be a real solution of  \eqref{ncont3}, analytic at $x=0$. Then
$\zeta _{k}$ satisfies the associated equation
\begin{eqnarray}
\nonumber
\frac{(n+m)!}{n!}\zeta _{n+m}&=&c_{N}\sum _{
\overset{l_{1},\ldots ,l_{N-1}=0}{l_{1}+\ldots + l_{N-1}\leq n}}
\zeta _{l_{1}} \cdots \zeta _{l_{N-1}} \zeta _{n-l_{1}-\ldots -l_{N-1}}
+\ldots
\\
&+&c_{2} \sum _{l_{1}=0}^{n}\zeta _{l_{1}} \zeta _{n-l_{1}}+c_{1}
\zeta _{n} +c_{0}\delta_{n,0}.
\label{eq6.3}
\end{eqnarray}
\end{proposition}
\begin{proof}
As a consequence of the definition of basic sequence for
$\{p_{k}\}_{k\in \mathbb{N}}$, one can easily prove that, for each
$m\in \mathbb{N}\setminus \{0\}$
\begin{equation}
\Delta ^{m} u_{n}=\sum _{l=0}^{n}\frac{n!}{(n-l)!} \frac{(l+m)!}{l!}
\zeta _{l+m}.
\label{eq6.4}
\end{equation}
Also,
\begin{equation}
u_{n}^{*p}=\sum _{l_{1},\ldots ,l_{p}=0}^{\infty}\zeta _{l_{1}}
\ldots \zeta _{l_{p}}p_{(l_{1}+\ldots +l_{p})}(n)=\sum _{l=0}^{n}
\frac{n!}{(n-l)!}\cdot \sum _{
\overset{l_{1},\ldots ,l_{p-1}=0}{l_{1}+\ldots + l_{p-1}\leq l}}
\zeta _{l_{1}} \cdots \zeta _{l_{p-1}} \zeta _{l-l_{1}-\ldots -l_{p-1}}.
\label{eq6.5}
\end{equation}
Combining the previous expressions yields the stated result.
\end{proof}
Both the linear case and the case of dynamical systems with nonconstant
coefficients can be treated analogously. Their detailed analysis is left
to the reader.

\section{Rota algebras and  Picard-Vessiot theory}
\label{sec:7}

This section aims to extend certain fundamental results of Galois theory,
established for homogeneous linear differential equations with constant
coefficients, to their \textit{alter ego} on $\mathcal{L}$ as defined by
Theorem~\ref{main1} (See e.g. \cite{Franke}, \cite{Kolchin}, \cite{PS1997}, \cite{PS2003}, \cite{Serre}  for relevant
definitions and results).

Let $\{a_{0},\ldots ,a_{N-1}\}\subset \mathbb{R}$. Let
$\mathcal{B(\mathcal{F}, \mathcal{Q})}$ be the space of linear operators
acting on the Rota differential algebra
$(\mathcal{F},+, \cdot , *_{\mathcal{Q}})$ over $\mathbb{R}$. We introduce
the linear operator
$T[\mathcal{Q}]\in \mathcal{B(\mathcal{F}, \mathcal{Q})}$ defined by
$T[\mathcal{Q}]:= \mathcal{Q}^{N}+a_{N-1} \mathcal{Q}^{N-1}+\ldots +a_{1}
\mathcal{Q}+a_{0}$. We consider the linear differential equation
\begin{equation}
T[{\partial}](y):= y^{(N)}+a_{N-1} y^{(N-1)}+\ldots +a_{1}y'+a_{0} y =0,
\label{diff}
\end{equation}
and its categorically equivalent equation
\begin{equation}
T[{\Delta}](u):=\Delta ^{N}u+a_{N-1} \Delta ^{N-1}u+\ldots +a_{1}
\Delta u +a_{0} u=0.
\label{discr}
\end{equation}
One can show that the rings $C_{\partial}$ and $C_{\Delta}$ of constants for $\mathcal{R}(\mathcal{F}, \partial)$ and $\mathcal{R}(\mathcal{F}, \Delta)$  coincide.
 The following result holds.
\begin{lemma}
\label{fundamental}
The morphism
$\rho _{\partial ,\Delta}: (\mathcal{F}, \mathcal{\partial})
\longrightarrow (\mathcal{F}, \Delta )$ maps isomorphically a fundamental
system of solutions $\mathcal{S}$ of the linear differential equation \eqref{diff} into a fundamental system
$\rho _{\partial , \Delta}(\mathcal{S})$ of the linear  equation \eqref{discr}.
\end{lemma}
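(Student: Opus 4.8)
The plan is to reduce the entire statement to a single intertwining identity between $\rho_{\partial,\Delta}$ and the two delta operators, and then to exploit the fact that $\rho_{\partial,\Delta}$ is a linear bijection of $\mathcal{F}$. First I would record that, by its defining action \eqref{morphism}, $\rho_{\partial,\Delta}$ sends the basic sequence $\{p_n\}$ of $\partial$ (the monomials $x^n$) to the basic sequence $\{q_n\}$ of $\Delta$ (the lower factorial polynomials) with the same coefficients, and therefore is a $\mathbb{K}$-linear isomorphism of $\mathcal{F}$ whose inverse is $\rho_{\Delta,\partial}$. Using the defining relations $\partial p_n = n\,p_{n-1}$ and $\Delta q_n = n\,q_{n-1}$ from Definition \ref{def2}, a computation on an arbitrary element $f=\sum_n c_n p_n$ gives $\rho_{\partial,\Delta}(\partial f)=\sum_n c_n n\,q_{n-1}=\Delta\bigl(\rho_{\partial,\Delta}(f)\bigr)$, so that
\[
\rho_{\partial,\Delta}\circ\partial = \Delta\circ\rho_{\partial,\Delta}.
\]

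Second, because the coefficients $a_0,\ldots,a_{N-1}$ lie in the common ring of constants $C_\partial=C_\Delta=\mathbb{K}$ (whose coincidence is noted just before the statement), multiplication by each $a_i$ is mere scalar multiplication and hence commutes with $\rho_{\partial,\Delta}$. Iterating the intertwining relation yields $\rho_{\partial,\Delta}\circ\partial^{\,j}=\Delta^{j}\circ\rho_{\partial,\Delta}$ for every $j$, and summing against the $a_i$ produces the operator identity
\[
\rho_{\partial,\Delta}\circ T[\partial] = T[\Delta]\circ\rho_{\partial,\Delta}.
\]
This is the heart of the argument: it says that $\rho_{\partial,\Delta}$ conjugates the differential operator into its categorically equivalent difference operator, and it is precisely the constant-coefficient hypothesis that makes this conjugation clean (for variable coefficients one would instead have to carry along the $*_\Delta$-products of the $a_i(x)$).

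Third, from this identity together with the bijectivity of $\rho_{\partial,\Delta}$ I would deduce the equivalence $y\in\ker T[\partial] \iff \rho_{\partial,\Delta}(y)\in\ker T[\Delta]$, the nontrivial direction using injectivity of $\rho_{\partial,\Delta}$. Consequently $\rho_{\partial,\Delta}$ restricts to a linear isomorphism between the solution space of \eqref{diff} and that of \eqref{discr}. Since the solution space of \eqref{diff} is an $N$-dimensional $\mathbb{K}$-vector space (classical) and an isomorphism preserves dimension, the solution space of \eqref{discr} is likewise $N$-dimensional; a linear isomorphism carries any basis to a basis, so a fundamental system $\mathcal{S}$ is sent to a fundamental system $\rho_{\partial,\Delta}(\mathcal{S})$, which is the assertion.

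The only point I expect to require genuine care is the bridge between the abstract solutions living in $\mathcal{F}$ (formal series in the falling-factorial basis) and the concrete sequences on the mesh $\mathcal{L}$ appearing in Theorem \ref{main1}: one must check that evaluating the transported functions on $\mathcal{L}$ preserves $\mathbb{K}$-linear independence, so that $\rho_{\partial,\Delta}(\mathcal{S})$ remains a bona fide fundamental system of the difference equation read as a recurrence. This resolves cleanly, since the evaluation $f=\sum_n c_n q_n \mapsto \bigl(f(m)\bigr)_{m\in\mathbb{N}}$ is triangular with nonvanishing diagonal $q_m(m)=m!$ by \eqref{pol}, hence injective on $\mathcal{F}$; everything else in the proof is formal.
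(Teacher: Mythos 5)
Your proof is correct, and it reaches the same conclusion by a somewhat more self-contained route than the paper. The paper's own proof is three sentences: it invokes Theorem \ref{main1} to obtain the one-to-one correspondence between power-series solutions of \eqref{diff} and \eqref{discr}, and then observes that $\rho_{\partial,\Delta}$ maps basic sequences to basic sequences, hence preserves linear independence and the dimension of the solution space. You instead bypass Theorem \ref{main1} entirely and prove the operator intertwining identity $\rho_{\partial,\Delta}\circ T[\partial]=T[\Delta]\circ\rho_{\partial,\Delta}$ directly from $\partial p_n=np_{n-1}$, $\Delta q_n=nq_{n-1}$ and the constancy of the coefficients; the solution correspondence and the preservation of bases then follow from bijectivity of $\rho_{\partial,\Delta}$ by pure linear algebra. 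What your version buys is transparency about exactly where the constant-coefficient hypothesis enters (scalar multiplication commutes with $\rho_{\partial,\Delta}$, whereas variable coefficients would force one to carry $*_\Delta$-products, which is the content of Theorem \ref{main1}); it also makes explicit, via the triangularity of evaluation on $\mathcal{L}$ with diagonal $q_m(m)=m!$, why linear independence survives the passage from formal series to mesh sequences --- a point the paper leaves implicit and which is genuinely needed in light of the paper's own remark that for a general delta operator $\mathcal{Q}$ the solution space of $T[\mathcal{Q}](u)=0$ can exceed dimension $N$. The paper's appeal to Theorem \ref{main1} is broader (it covers variable coefficients) but, for this constant-coefficient lemma, your direct conjugation argument is the cleaner and more informative of the two.
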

\begin{proof}
As a consequence of Theorem~\ref{main1}, power series solutions of eqs.~\eqref{diff}
and \eqref{discr} are in one-to-one correspondence. The morphism
$\rho _{\partial ,\Delta}$ maps basic sequences into basic sequences, so
it preserves linear independence of solutions and the dimension of the
associated vector space. Then
$\mathcal{S}'=\rho _{\partial , \Delta}(\mathcal{S})$ is a fundamental
set for eq.~\eqref{discr}.
\end{proof}

Our purpose will be to define an extension of a differential subring $\mathcal{F}'\subseteq \mathcal{F}$ for eq.~\eqref{discr} by using the categorical approach developed before. To this aim, we construct the \textit{universal solution algebra} $\mathcal{U}$ of eq.~\eqref{discr}.
Let
\begin{equation}
\Delta Y =AY,\qquad A\in M_N(\mathbb{R})
\label{matrixdiff}
\end{equation}
be its matrix form, where it is understood that
$Y_{i+1,j}=\Delta Y_{ij}$. We introduce an $N\times N$ matrix of indeterminates
$Y=(Y_{ij})$ and define
\begin{equation}
\mathcal{U}[\Delta ]:=\mathcal{F}'\left [Y_{ij},
\frac{1}{{\det}_* Y}\right ], \quad 1\leq i,j\leq N.
\label{eq7.4}
\end{equation}
Here we introduce the modified Wronskian
\begin{equation}
\label{wronsk}
{\det}_* Y:=
\left|\begin{matrix}
Y_{11} &Y_{12}&\ldots & Y_{1N}
\\
\Delta Y_{11}&\Delta Y_{12}&\ldots &\Delta Y_{1N}
\\
\Delta ^{2} Y_{11}&\ldots & & \Delta ^{2} Y_{1N}
\\
\vdots & & &\vdots
\\
\Delta ^{N-1}Y_{11}&\ldots & &\Delta ^{N-1} Y_{1N}
\end{matrix}\right|
\ .
\end{equation}

\begin{definition}
A Picard--Vessiot (PV) ring for the matrix equation \eqref{matrixdiff} over the Rota algebra $(\mathcal{F}',\Delta )$ is a simple differential ring $\mathcal{O}(+,\cdot,*_{\Delta})$ such that

i) There exists a fundamental matrix
$Z\in \text{GL}_{N}(\mathcal{O})$ for eq.~\eqref{matrixdiff}.

ii) The ring $\mathcal{O}$ is generated by $\mathcal{F}'$, the entries of
$Z$ and $\frac{1}{{\det}_* Z}$.
\end{definition}
To construct a PV-ring, we introduce the notion of \textit{Rota ideal}.
\begin{definition}
\label{defn11}
A \textit{Rota ideal} for the derivation $\Delta$ is an ideal invariant under  $\Delta$.
\end{definition}
The following result holds.
\begin{proposition}
\label{prop3}
Let $\mathcal{I}$ be a maximal Rota ideal of
$\mathcal{U}[\Delta ]$. Then the ring
$\mathcal{V}[\Delta ]:=\mathcal{U}[\Delta ]/\mathcal{I}$ is a
\textit{Picard-Vessiot ring} for the equation \eqref{matrixdiff}.
\end{proposition}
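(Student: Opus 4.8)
The plan is to reproduce, in the present Rota-algebraic language, the classical van der Put--Singer construction of a Picard--Vessiot ring as a quotient of the full solution algebra by a maximal difference ideal. The whole argument rests on the fact, built into the definition of Rota algebra, that $\Delta$ acts as a \emph{derivation} on $(\mathcal{F},+,\cdot,*_{\Delta})$, so that the theory of differential (here difference) rings applies verbatim. First I would make precise the difference-ring structure of $\mathcal{U}[\Delta]$: since $\Delta$ is a derivation on $\mathcal{F}$ and satisfies the Leibniz rule for $*_{\Delta}$, it extends uniquely to $\mathcal{F}[Y_{ij}]$ through $\Delta Y = AY$ (equivalently, through the convention $Y_{i+1,j}=\Delta Y_{i,j}$ together with the scalar relation generating the difference ideal). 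The extension to the localization at the modified Wronskian \eqref{wronsk} is then forced by the quotient rule and is consistent precisely because of the trace (Abel--Liouville) identity $\Delta\,\det(Y)=\mathrm{tr}(A)*\det(Y)$, which one checks by differentiating the Wronskian row by row using $\Delta Y=AY$. This equips $\mathcal{U}[\Delta]$ with a well-defined structure of difference ring extending $\mathcal{F}$, in which the columns of $Y$ solve \eqref{matrixdiff} by construction.

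Next I would establish simplicity of the quotient. Because $\mathcal{I}$ is maximal among the $\Delta$-stable (difference) ideals of $\mathcal{U}[\Delta]$, the standard lattice correspondence between difference ideals of $\mathcal{V}[\Delta]=\mathcal{U}[\Delta]/\mathcal{I}$ and difference ideals of $\mathcal{U}[\Delta]$ containing $\mathcal{I}$ shows that $\mathcal{V}[\Delta]$ possesses only the trivial difference ideals $(0)$ and $\mathcal{V}[\Delta]$. Hence $\mathcal{V}[\Delta]$ is a simple difference ring, which is the simplicity requirement in the definition of a PV ring.

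It then remains to verify conditions i) and ii). Let $Z$ be the image of $Y$ under the canonical projection $\mathcal{U}[\Delta]\to\mathcal{V}[\Delta]$. Since $\det(Y)$ is a unit in $\mathcal{U}[\Delta]$, its image $\det(Z)$ is a unit in $\mathcal{V}[\Delta]$, so that $Z\in\mathrm{GL}_{N}(\mathcal{V}[\Delta])$; moreover $\Delta Z=AZ$, so $Z$ is a fundamental matrix and i) holds. Property ii) is inherited directly from the presentation $\mathcal{U}[\Delta]=\mathcal{F}\left[Y_{ij},1/\det(Y)\right]$, because the projection restricts to the identity on $\mathcal{F}$ and sends the generators $Y_{ij}$ and $1/\det(Y)$ to $Z_{ij}$ and $1/\det(Z)$. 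Together with simplicity, this shows that $\mathcal{V}[\Delta]$ is a Picard--Vessiot ring for \eqref{matrixdiff}.

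I expect the only genuinely delicate step to be the consistency of the derivation on the localized ring, that is, the verification of the trace identity for the $\Delta$-Wronskian \eqref{wronsk}: this is where the derivation property of $\Delta$ with respect to the nonlocal product $*_{\Delta}$ is essential, and where the computation must be carried out with care so that the nonlocality of $*$ does not spoil Jacobi's formula. A secondary subtlety, should one wish to match the classical notion exactly, is to confirm that passing to a maximal difference ideal does not enlarge the ring of constants; here the already noted coincidence of $C_{\partial}$ and $C_{\Delta}$, together with the isomorphism of fundamental systems furnished by Lemma \ref{fundamental}, reduces this point to the corresponding, well-understood statement for the differential equation \eqref{diff}.
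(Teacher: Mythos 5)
Your proof is correct and follows the same classical van der Put--Singer route that the paper implicitly relies on; the paper's own proof consists of the single observation that the Wronskian \eqref{wronsk} is invertible in the quotient $\mathcal{U}[\Delta]/\mathcal{I}$. You supply the details the paper leaves implicit --- in particular the simplicity of $\mathcal{V}[\Delta]$ via the lattice correspondence for difference ideals containing $\mathcal{I}$, which is the defining property of a PV ring as stated there, and the verification of conditions i) and ii) for the image $Z$ of $Y$ --- so your write-up is a strictly more complete version of the same argument.
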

\begin{proof}
Because $\mathcal{I}$ is a maximal Rota ideal, the quotient ring $\mathcal{V}[\Delta]$ is a simple differential ring under the induced action of $\Delta$. Then, it suffices to observe that in the quotient ring the Wronskian \eqref{wronsk} is invertible.
\end{proof}
We shall denote by $\mathcal{U}[\partial ]$ the universal solution algebra of the matrix differential
equation $Y'=AY$ associated to eq.~\eqref{diff}, where
$Y_{i+1,j}= Y'_{i,j}$. We introduce the notion of a differential Galois
group for the previous equations.

\begin{definition}
\label{defn12}
Given the Rota algebra $(\mathcal{F}',\Delta )$, let
$\mathcal{V}[\Delta ]$ be a Picard-Vessiot ring over $\mathcal{F}'$. The
differential Galois group of $\mathcal{V}[\Delta ]$ over
$(\mathcal{F}',\Delta )$,
$\text{DGal}\left (\mathcal{V}/ (\mathcal{F}',\Delta )\right )$, is the group
of the differential $(\mathcal{F}',\Delta )$-automorphisms.
\end{definition}
Let $\mathcal{M}$ denote a maximal differential ideal of eq.~\eqref{diff}. The main result of this section is the following

\begin{theorem}
\label{thm4}
Let $\mathcal{W}[\partial ]:=\mathcal{U}[\partial ]/\mathcal{M}$ be a PV-ring over $(\mathcal{F}', \partial)$
for the linear homogeneous differential equation  \eqref{diff} and
$\mathcal{V}[\Delta ]=\mathcal{U}[\Delta ]/\mathcal{I}$ be the associated PV-ring over $(\mathcal{F}', \Delta)$ for the
equation  \eqref{discr}. Then there exists an isomorphism of groups $\Phi $ such that
\begin{equation}
\text{DGal}\left (\mathcal{W}[\partial]/ (\mathcal{F}', \partial )\right )
\stackrel{\Phi}{\longleftrightarrow} \text{DGal}\left (\mathcal{V}[
\Delta ]/ (\mathcal{F}', \Delta )\right ).
\label{eq7.7}
\end{equation}
\end{theorem}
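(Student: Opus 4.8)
The plan is to realize the isomorphism $\Phi$ as conjugation by the category morphism $\rho_{\partial,\Delta}$, after upgrading it to an isomorphism of the two Picard--Vessiot rings. The essential observation, which I would establish first, is that $\rho_{\partial,\Delta}$ is not merely an isomorphism of associative algebras but an isomorphism of \emph{differential} algebras, intertwining the two derivations. Indeed, writing $f=\sum_n a_n p_n(x)$ with $\{p_n\}$ basic for $\partial$ and $\{q_n\}$ basic for $\Delta$, the defining property $\mathcal{Q}p_n=np_{n-1}$ gives
\beq
\rho_{\partial,\Delta}(\partial f)=\rho_{\partial,\Delta}\Big(\sum_n (n+1)a_{n+1}p_n\Big)=\sum_n (n+1)a_{n+1}q_n=\Delta\,\rho_{\partial,\Delta}(f),
\eeq
so that $\rho_{\partial,\Delta}\circ\partial=\Delta\circ\rho_{\partial,\Delta}$. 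Since $\rho_{\partial,\Delta}$ is by construction a morphism in $\mathcal{R}(\mathcal{F})$, hence an isomorphism of $*$-algebras, it is an isomorphism $(\mathcal{F},\partial)\xrightarrow{\sim}(\mathcal{F},\Delta)$ of Galois differential algebras fixing the common field of constants (note $\rho_{\partial,\Delta}$ fixes constants, since $p_0=q_0=1$, consistent with the coincidence $C_\partial=C_\Delta$ recorded above).

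Next I would extend $\rho_{\partial,\Delta}$ to the universal solution algebras. Because the coefficients $a_0,\dots,a_{N-1}\in\mathbb{K}$ are constants, the companion matrix $A$ of equation \eqref{diff} coincides with that of \eqref{discr}; mapping $\mathcal{F}$ by $\rho_{\partial,\Delta}$ and sending the matrix of indeterminates of $Y'=AY$ to that of $\Delta Y=AY$ yields a differential-ring isomorphism $\tilde\rho:\mathcal{U}[\partial]\xrightarrow{\sim}\mathcal{U}[\Delta]$ intertwining $\partial$ and $\Delta$ (the intertwining on $\mathcal{F}$ was just checked, and on the indeterminates it holds by the defining relations $Y_{i+1,j}=\partial Y_{i,j}$ going to $Y_{i+1,j}=\Delta Y_{i,j}$). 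Consequently $\tilde\rho$ carries the generators of a differential ideal of $\mathcal{U}[\partial]$ bijectively onto the generators of a difference ideal of $\mathcal{U}[\Delta]$, and sets up an inclusion-preserving bijection between their differential (resp.\ difference) ideals. In particular it maps the maximal differential ideal $\mathcal{M}$ to a maximal difference ideal $\tilde\rho(\mathcal{M})$, and descends to an isomorphism of PV-rings
\beq
\bar\rho:\ \mathcal{W}[\partial]=\mathcal{U}[\partial]/\mathcal{M}\ \xrightarrow{\ \sim\ }\ \mathcal{U}[\Delta]/\tilde\rho(\mathcal{M}).
\eeq

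Finally I would set $\Phi(\sigma):=\bar\rho\circ\sigma\circ\bar\rho^{-1}$ for $\sigma\in\text{DGal}(\mathcal{W}/(\mathcal{F},\partial))$ and verify that $\Phi$ is a group isomorphism. Using $\bar\rho|_{\mathcal{F}}=\rho_{\partial,\Delta}$ one checks that $\Phi(\sigma)$ fixes $\mathcal{F}$ pointwise, and using $\bar\rho\circ\partial=\Delta\circ\bar\rho$ together with $\sigma\circ\partial=\partial\circ\sigma$ one checks that $\Phi(\sigma)$ commutes with $\Delta$; hence $\Phi(\sigma)\in\text{DGal}(\mathcal{V}/(\mathcal{F},\Delta))$. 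That $\Phi$ is a homomorphism with inverse $\tau\mapsto\bar\rho^{-1}\circ\tau\circ\bar\rho$ is then immediate. The main obstacle is the dependence on the choice of maximal ideal: the difference ideal $\mathcal{I}$ fixed in the statement need not equal $\tilde\rho(\mathcal{M})$, since maximal differential/difference ideals are in general not unique. I would resolve this by invoking the standard uniqueness theorem of Picard--Vessiot theory: over the common field of constants $C_\partial=C_\Delta$ (assumed algebraically closed, as is standard), the PV-ring of a fixed linear equation is unique up to differential isomorphism, so $\mathcal{V}[\Delta]=\mathcal{U}[\Delta]/\mathcal{I}$ is canonically isomorphic to $\mathcal{U}[\Delta]/\tilde\rho(\mathcal{M})$; composing $\bar\rho$ with this isomorphism yields $\Phi$ independently of the choice of $\mathcal{I}$. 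Equivalently, Lemma \ref{fundamental} guarantees that the faithful matrix representations of the two groups into $\text{GL}_N(C)$ have identical image, which is the conjugation-invariant content of the asserted isomorphism.
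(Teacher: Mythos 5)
Your proposal is correct and follows essentially the same route as the paper: $\rho_{\partial,\Delta}$ induces an isomorphism of the universal solution algebras carrying the maximal differential ideal to a maximal difference ideal, hence an isomorphism of PV-rings, and the Galois groups then correspond by conjugation. You go further than the paper in two useful respects---you verify explicitly the intertwining relation $\rho_{\partial,\Delta}\circ\partial=\Delta\circ\rho_{\partial,\Delta}$ on which the whole argument rests, and you address the dependence on the choice of the maximal difference ideal $\mathcal{I}$ (which need not coincide with the image of $\mathcal{M}$), resolving it via the standard uniqueness of PV-rings over an algebraically closed field of constants, a point the paper's proof silently elides.
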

\begin{proof}
According to Lemma~\ref{fundamental}, given a fundamental set
$\mathcal{S}$ of solutions of eq.~\eqref{diff},
$\mathcal{S}'=\rho _{\partial , \Delta}(\mathcal{S})$ is a fundamental
set of solutions of eq.~\eqref{discr}. This implies that the universal
solution algebras $\mathcal{U}[\partial ]$ and
$\mathcal{U}[\Delta ]$ are isomorphic. The morphism
$\lambda _{\partial , \Delta}$ maps a (maximal) differential ideal
$\mathcal{M}$ into a (maximal) Rota ideal $\mathcal{I}$, so that
the associated Picard-Vessiot rings are isomorphic. Then the differential
automorphisms of the Galois groups associated with eqs.~\eqref{diff} and \eqref{discr} are in one-to-one correspondence.
\end{proof}
\begin{remark}
\label{rem5}
An analogous result holds for the  equation
$T[\nabla ](u)=0$, where $\nabla =1-T^{-1}$. However, the previous construction
does not obviously extend to an arbitrary delta operator
$\mathcal{Q}$. In fact, the fundamental set of equation $T[\mathcal{Q}](u)=0$ generates a linear space whose dimension is not necessarily preserved.
\end{remark}

\section{Main theorems for the singular case}
\label{sec:8}

In this Section we shall prove the main theorems of our theory of Laurent
polynomial sequences.

In the following Theorems~\ref{main3} and \ref{main4}, we will define our
maps in the \textit{twisted Rota algebra}
$(\mathcal{H}, \mathcal{Q})$, where $\mathcal{H}$ is the space of Laurent
formal series, and $p_{k}(x)_{k\in \mathbb{Z}}$ are
\textit{Laurent basic sequences} for the delta operator $Q=\Delta $.

\subsection{Linear difference equations}
\label{sec8.1}

\begin{theorem}
\label{main3}
Consider an ordinary linear differential equation of order
$N\in \mathbb{N}\setminus \{0\}$ of the form
\begin{equation}
\label{eq:main3}
lin(\partial , y):=a_{N}(x) \frac{d^{N}}{dx^{N}}y + a_{N-1}(x)
\frac{d^{N-1}}{dx^{N-1}}y +\ldots +a_{1}(x) \frac{d}{dx} y+a_{0}(x) y +g(x)=0,
\end{equation}
where $a_{i}(x)$, $i=0,\ldots ,N$ and $g(x)$ are real functions admitting at
most polar singularities at $x=0$ of the form:
\begin{equation}
\label{eq:8.2}
a_{i}(x)=\sum _{j=-\ell _{i}}^{r_{i}}\alpha _{ij} x^{j}, \qquad g(x)=
\sum _{j=-\ell}^{r}\gamma _{j} x^{j} .
\end{equation}
Assume that eq.~\eqref{eq:main3} has a solution possessing a polar singularity
of order $s$ at the origin:
\begin{equation}
\label{eq:lpolsol}
y(x)=\sum _{k=1}^{s}\frac{\zeta _{-k}}{x^{k}}
\end{equation}
where
$s\geq \max \{\ell +r,\ell _{0}+r_{0},\ldots ,\ell _{N}+r_{N} \}$. Then,
the difference equation
\begin{equation}
\label{eq:8.4}
\sum _{m=0}^{N} \sum _{j=-\ell _{m}}^{r_{m}} \alpha _{mj}
\frac{(n-1)!}{(n-j-1)!}\Delta ^{m}u_{n-j}+\sum _{j=-\ell}^{r}\gamma _{j}
p_{j}(n)=0,
\end{equation}
lying in the positive real line, has a solution of the form
\begin{equation}
\label{eq:8.5}
u_{n}=\sum _{k=1}^{s}\frac{(n-1)!}{(n+k-1)!}\zeta _{-k}.
\end{equation}
\end{theorem}
\begin{proof}
Our aim is to construct the equation categorically equivalent to eq.~\eqref{eq:main3}
in $\mathcal{K}_{\{ a_{0},\ldots , a_{N},g\}}$, for the class of functions \eqref{eq:lpolsol}. Consequently, we represent the product
$x^{r}\frac{\mathrm{d}^{s}}{\mathrm{d}x^{s}}y(x)$ on the set
$\mathcal{L}$ by the action of the morphism
$\rho _{\partial ,\Delta}: (\mathcal{H},+,\cdot )\rightarrow (
\mathcal{H},+,*_{\Delta})$. This morphism determines the correspondences
detailed below.

Let us first consider the case of the higher derivatives
\begin{equation}
y^{(m)}=\sum _{k=-\infty}^{\infty}\frac{k!}{(k-m)!}\zeta _{k}x^{k-m}
\longrightarrow \sum _{k=-\infty}^{\infty}\frac{k!}{(k-m)!}\zeta _{k} p_{k-m}(n),
\label{eq8.6}
\end{equation}
\begin{equation}
\Delta ^{m}u_{n}=\sum _{k=-\infty}^{\infty} \zeta _{k} \Delta ^{m}p_{k
}(n)=\sum _{k=-\infty}^{\infty} \frac{k!}{(k-m)!}\zeta _{k} p_{k-m}(n),
\label{eq8.7}
\end{equation}
and then,
\begin{equation}
y^{(m)}\longrightarrow \Delta ^{m} u_{n}= \sum _{i=0}^{m}(-1)^{m-i}
\binom{m}{i}u_{n+i}.
\label{eq8.8}
\end{equation}
These results are a natural consequence of the properties of the delta
operators. By analogy, we get
\begin{eqnarray*}
x^{r} y^{(m)} &\longrightarrow & p_{r}(n)*\sum _{k=-\infty}^{0}
\frac{k!}{(k-m)!}\zeta _{k} p_{k-m}(n)=\sum _{k=-\infty}^{0}
\frac{k!}{(k-m)!} \zeta _{k} p_{k-m+r}(n)
\\
&=&\sum _{k=-\infty}^{-1} \frac{k!}{(k-m)!}\zeta _{k}
\frac{(n-1)!(n-r-k+m-1)!}{(n-r-1)!(n-k+m-r-1)!}p_{k-m}(n-r),
\\
&=&\frac{(n-1)!}{(n-r-1)!}\sum _{k=-\infty}^{-1} \frac{k!}{(k-m)!}
\zeta _{k} p_{k-m}(n-r)
\\
&=&\frac{(n-1)!}{(n-r)!}\bigg((n-r)\sum _{k=-\infty}^{-1}
\frac{k!}{(k-m)!} \zeta _{k} p_{k-m}(n-r)\bigg)
\\
&=&\frac{(n-1)!}{(n-r-1)!}\Delta ^{m}u_{n-r}.
\end{eqnarray*}
Therefore, taking into account relations  \eqref{eq:8.2}, we obtain
\begin{equation}
a_{m}(x)y^{(m)}(x) \to \sum _{j=-\ell _{m}}^{r_{m}} \alpha _{mj}
\frac{(n-1)!}{(n-j-1)!}\Delta ^{m}u_{n-j}.
\label{eq8.9}
\end{equation}
By combining the previous formulas, we deduce the form of eq.~\eqref{eq:8.4}.

To prove that the series \eqref{eq:lpolsol} is solution of eq.~\eqref{eq:8.4}, observe that the morphism
$\lambda _{\partial ,\Delta}$ provides the correspondence
\begin{equation}
\sum _{k} \zeta _{-k} x^{-k} \longrightarrow \sum _{k} \zeta _{-k}
\frac{1}{p_{k}^{-}(n)}.
\label{eq8.10}
\end{equation}

The action of the functor $F$ carries any solution $y$ of eq.~\eqref{eq:main3},
defined on the algebra $(\mathcal{H},+,\cdot )$, into a solution
$u_{n}$ of the corresponding equation
\begin{equation*}
lin(\Delta ,u,*_{\Delta})=\lambda _{\partial ,\Delta}(lin(\partial , y,
\cdot )),
\end{equation*}
defined on the extended Rota algebra $(\mathcal{H},+,*_{\Delta})$. Once
represented this equation on the lattice $\mathcal{L}$, any finite Laurent
series solution  \eqref{eq:lpolsol} of  eq.~\eqref{eq:main3} is carried into
a solution of the form \eqref{eq:8.5} of the equation
\begin{equation*}
lin(\mathcal{Q},u,*_{\mathcal{Q}})=\lambda _{\partial ,\Delta}(lin(
\partial ,y,\cdot )).
\end{equation*}
This equation, according to the previous construction, coincides with eq.~\eqref{eq:8.4}.
\end{proof}

\subsection{Nonlinear difference equations}
\label{sec8.2}

We shall prove a theorem concerning the discretization of nonlinear equations
of the form  \eqref{nonlincont} which preserves solutions admitting polar
singularities.
\begin{theorem}
\label{main4}
Consider an ordinary differential equation of order
$m\in \mathbb{N}\setminus \{0\}$ of the form
\begin{equation}
\label{eq:main4}
eq(\partial , y):=\frac{d^{m}}{dx^{m}}y-\sum _{r=0}^{N} c_{r}(x) y^{r}=0,
\end{equation}
where $N, m\in \mathbb{N}\setminus \{0\}$, $c_{r}(x)$,
$r=0,\ldots ,N$ are real functions admitting at most polar singularities at
$x=0$ of the form:
\begin{equation}
c_{r}(x)=\sum _{j=-\ell _{r}}^{\rho _{r}}\beta _{rj} x^{j}.
\label{eq8.12}
\end{equation}
Assume that eq.~\eqref{eq:main4} has a solution possessing a polar singularity
of order $s$ at the origin:
\begin{equation}
\label{eq:npolsol}
y(x)= \sum _{k=1}^{s}\frac{\zeta _{-k}}{x^{k}},
\end{equation}
where
$s\geq \max \{\ell _{0}+\rho _{0},\ldots ,\ell _{N}+\rho _{N} \}$. Then,
the difference equation
\begin{equation}
\label{eq:8.14}
\sum _{i=0}^{m}(-1)^{m-i}\binom{m}{i}u_{n+i}=\sum _{r=0}^{N} \sum _{j=-
\ell _{r}}^{\rho _{r}} \sum _{k_{1},\cdots ,k_{r}=-\infty}^{-1}
\beta _{rj} \, \zeta _{k_{1}}\cdots \zeta _{k_{r}}p_{K_r+j}(n)
\end{equation}
with $K_{r}=\sum _{i=1}^{r} k_{i}$, defined over a uniform lattice in the positive real line, has a
solution of the form
\begin{equation}
\label{eq:8.15}
u_{n}=\sum _{k=1}^{s}\frac{(n-1)!}{(n+k-1)!}\zeta _{-k}.
\end{equation}
\end{theorem}
\begin{proof}
\label{eq:8.23}
The discretization runs as follows. The function $y^{r}(x)$, is written
as a Laurent expansion centred at the origin. Let us write formally
\begin{equation}
y^{r}\to \left (\sum _{k_{1}=-\infty}^{-1}\zeta _{k_{1}} p_{k_{1}}(n)
\right )*\cdots * \left (\sum _{k_{r}=-\infty}^{-1}\zeta _{k_{r}} p_{k_{r}}(n)
\right )= \sum _{k_{1},\cdots ,k_{r}=-\infty}^{-1}\zeta _{k_{1}}
\cdots \zeta _{k_{r}}p_{K_{r}}(n),
\label{eq8.16}
\end{equation}
where $K_{r}=\sum _{i=1}^{r} k_{i}$, and $\zeta _{k_{i}}=0$ for
$k_{i}=-s-1,\ldots , -\infty $. Thus,
\begin{eqnarray}
c_{r}(x) y^{r} &\to & \sum _{j=-\ell _{r}}^{\rho _{r}}\beta _{rj} p_{j}(n)*
\sum _{k_{1},\cdots ,k_{r}=-\infty}^{-1}\zeta _{k_{1}}\cdots \zeta _{k_{r}}p_{K_{r}}(n)
=
\label{eq8.17}
\\
\nonumber
&=&\sum _{j=-\ell _{r}}^{\rho _{r}} \sum _{k_{1},\cdots ,k_{r}=-
\infty}^{-1} \beta _{rj} \, \zeta _{k_{1}}\cdots \zeta _{k_{r}}p_{K_{r}+j}(n).
\end{eqnarray}
Then, the discretized function, $u_{n}$, is written as
\begin{equation}
u_{n}=\sum _{k=1}^{s}\frac{\zeta _{-k}}{p^{-}_{k}(n)}.
\label{eq8.18}
\end{equation}
The derivatives $y^{(m)}$, $m=1,\ldots , r$, are discretized according
to the operator $\Delta $:
\begin{equation}
\Delta ^{m} u_{n}=\sum _{i=0}^{m}(-1)^{i}\binom{m}{i}u_{n+i}.
\label{eq8.19}
\end{equation}
The constants $\zeta _{-k}$ satisfy a linear algebraic system:
\begin{equation}
\label{eq:syst}
\begin{cases}
u_{n+1} = \sum _{k=1}^{s}\frac{\zeta _{-k}}{p^{-}_{k}(n+1)}= \sum _{k=1}^{s}
A_{n+1,k}\,\zeta _{-k},
\\
\hspace{3mm}
\vdots
\\
u_{n+s} = \sum _{k=1}^{s}\frac{\zeta _{-k}}{p^{-}_{k}(n+s)}= \sum _{k=1}^{s}
A_{n+s,k}\,\zeta _{-k},
\end{cases}
\end{equation}
where $A_{j k}=\frac{(j-1)!}{(j+k-1)!}$, $k=1\ldots ,s$,
$j=n+1\ldots ,n+s$. Unlike in the previous theorems, where the corresponding
coefficient matrix acquires a triangular form, in the present case the
solution of system  \eqref{eq:syst} cannot be easily written in a closed
form in terms of the variables $u_{n+i}$. We obtain
\begin{equation}
\zeta _{-k}= \sum _{\lambda =1}^{s} \big(A^{-1}\big)_{k,\lambda}u_{n+
\lambda}.
\label{eq8.21}
\end{equation}
Thus, the discrete equation reads:
\begin{eqnarray}\nn
\sum _{i=0}^{m}(-1)^{i}\binom{m}{i}u_{n+i} = \sum _{r=0}^{N} \sum _{j=-
\ell _{r}}^{\rho _{r}} \sum _{k_{1},\cdots ,k_{r}=-\infty}^{-1} \sum _{
\lambda _{1},\ldots ,\lambda _{r}=1}^{s} \beta _{rj} \big(A^{-1}\big)_{k_{1},
\lambda _{1}}\cdots \big(A^{-1}\big)_{k_{r},\lambda _{r}} \cdot
\label{eq8.22}
\\
\cdot \,\, u_{n+\lambda _{1}}\cdots u_{n+\lambda _{r}}p_{K_{r}+j}(n),
\end{eqnarray}
where
\begin{equation}
p_{K_r+j}(n)=
\begin{cases}
\dfrac{n!}{(n-K_r-j)!}, & K_r+j\geq 0,
\\[10pt]
\dfrac{(n-1)!}{(n-K_r-j-1)!}, & K_r+j <0.
\end{cases}
\label{eq8.23}
\end{equation}

Let us prove that the series  \eqref{eq:8.15} is solution of eq.~\eqref{eq:8.14}. To this aim, we observe that the morphism
$\rho _{\partial ,\Delta}$ carries on
\begin{equation}
\sum _{k=1}^{s} \zeta _{-k} x^{-k} \longrightarrow \sum _{k=1}^{s}
\zeta _{-k} \frac{1}{p_{k}^{-}(n)}.
\label{eq8.24}
\end{equation}
Thus, the action of the functor $F$ carries any solution $y$ of eq.~\eqref{eq:main3},
defined on the algebra $(\mathcal{H},+,\cdot )$, into a solution
$u_{n}$ of the corresponding equation
\begin{equation*}
eq(\Delta ,u,*_{\Delta})=\nu _{\partial ,\Delta}(eq(\partial , y,
\cdot )),
\end{equation*}
defined on the extended Rota algebra $(\mathcal{H},+,*_{\Delta})$. According
to the previous analysis, we represent this equation on the lattice
$\mathcal{L}$. Then, any finite Laurent series solution  \eqref{eq:npolsol} of  eq.~\eqref{eq:main4} is carried into a solution of
the form  \eqref{eq:8.15} of the equation
\begin{equation*}
eq(\mathcal{Q},u,*_{\mathcal{Q}})=\nu _{\partial ,\Delta}(eq(
\partial ,y,\cdot )).
\end{equation*}
This equation coincides with eq.~\eqref{eq:8.14}.
\end{proof}

\begin{remark}
\label{rem6}
We wish to emphasize that, in concrete examples, the applicability of the
categorical discretization method proposed in Theorems~\ref{main3} and \ref{main4} can be considerably extended. While generalizing the hypotheses
of our theorems may not be straightforward, specific cases can nevertheless
be analyzed and solved rigorously.
\end{remark}

\section{Nonlocal Lie symmetries and integrable maps: some conjectures}
\label{sec9}

The categorical discretization introduced in Theorems~\ref{main1},
\ref{main2}, \ref{main3} and \ref{main4} establishes a precise structural
correspondence between the space of smooth solutions of the original differential
equations and the space of solutions of the associated difference equations.
This result ensures that the passage from the continuous to the discrete
setting, carried out within this categorical framework, does not merely
approximate the original system but preserves its essential solution structure
in a rigorous and exact sense.

Building on this observation, it is a natural problem to ascertain to what degree such a correspondence, established at the level of solutions,
extends to the symmetry properties of the systems under consideration.
Specifically, we hypothesize the existence of a meaningful relationship
between the Lie symmetry groups and their corresponding Lie algebras admitted
by ordinary differential equations and those admitted by their discrete counterparts.
For a comprehensive account of the modern theory of Lie symmetries for
continuous systems, see e.g. \cite{olver}. If confirmed, this would suggest
that the discretization not only preserves solutions but also retains,
in an appropriate sense, the geometric and algebraic structures encoded
by symmetries, which play a central role in the qualitative analysis of
differential equations.

We first conjecture the existence of symmetry transformations.
\begin{conjecture}
\label{conj1}
The integrable maps  \eqref{eqlin} and  \eqref{eq:8.4} (resp.~\eqref{eqnonlin}
and  \eqref{eq:8.14} for the singular case) admit a Lie group
$\mathcal{G}$ of nonlocal diffeomorphisms that leave the map invariant
and transform solutions into solutions.
\end{conjecture}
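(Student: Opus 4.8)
The plan is to obtain the symmetry group $\mathcal{G}$ by transporting the classical Lie point symmetries of the continuous equation \eqref{lincont} (resp.\ \eqref{nonlincont}) through the interpolating isomorphism $\rho_{\partial,\Delta}$. By Theorem \ref{main1} (resp.\ Theorem \ref{main2}), and in the linear case by Lemma \ref{fundamental}, this morphism restricts to a bijection between the space of analytic solutions $y(x)=\sum_k b_k x^k$ of the ODE and the space of solutions $u_n=\sum_{k=0}^n b_k\,n!/(n-k)!$ of the map. Since the two equations are categorically equivalent objects of the same category, a group acting on the solutions of one is expected to act on the solutions of the other by conjugation, and the intrinsic nonlocality of the $*$-product should force the transported group to be nonlocal.

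First I would compute, by the standard prolongation algorithm (see \cite{olver}), the Lie symmetry algebra of the continuous equation, generated by vector fields $X=\xi(x,y)\,\partial_x+\eta(x,y)\,\partial_y$ whose flows $\exp(\varepsilon X)$ map analytic solutions to analytic solutions. Because such a point transformation acts on all of $(x,y)$-space, it transforms the graph of an arbitrary series $y(x)$, hence induces an action $b_k\mapsto b_k(\varepsilon)$ on the full coefficient space, not merely on solutions. Second, I would conjugate this action by the pair of reciprocal transforms \eqref{eq:3.9} and \eqref{invtr}, setting $g_\varepsilon\cdot u:=\rho_{\partial,\Delta}\bigl(\exp(\varepsilon X)\cdot\rho_{\partial,\Delta}^{-1}(u)\bigr)$; since both transforms are defined on all of $\mathcal{F}$, this yields a one-parameter family acting on every sequence $\{u_n\}$. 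The resulting transformations are manifestly \emph{nonlocal}: because \eqref{invtr} expresses each $b_k$ as a finite sum over $u_0,\dots,u_k$, the new value of $u_n$ depends on all of $u_0,\dots,u_n$, in agreement with the nonlocality of the $*$-product discussed above. Differentiating at $\varepsilon=0$ then produces the nonlocal infinitesimal generators of $\mathcal{G}$.

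Invariance of the map would follow from the construction together with the functorial correspondence. By design the conjugated flow sends solutions of \eqref{eqlin} (resp.\ \eqref{eqnonlin}) to solutions; to upgrade this to genuine invariance of the map one applies the functor $F$ (resp.\ $G$), observing that $\exp(\varepsilon X)$ preserves the abstract equation \eqref{linabstr} (resp.\ \eqref{nonlinabstr}) in the Rota algebra, and that $\rho_{\partial,\Delta}$ intertwines the two realizations $\mathcal{Q}=\partial$ and $\mathcal{Q}=\Delta$. This shows that $g_\varepsilon$ leaves the solution set invariant while being defined on the entire ambient sequence space, which is precisely the content of ``nonlocal diffeomorphisms that leave the map invariant and transform solutions into solutions''.

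The hard part will be to show that these conjugated transformations genuinely constitute a \emph{Lie group of diffeomorphisms} on the infinite-dimensional space of sequences, rather than a merely formal one-parameter family. This requires controlling the convergence of the nonlocal sums produced by \eqref{invtr}, establishing existence and uniqueness of the flow for the resulting nonlocal system on coefficient space, and verifying that each $g_\varepsilon$ is a smooth bijection with smooth inverse. A further delicate point is whether the continuous symmetry preserves analyticity and the radius of convergence, so that the transformed series remains in the domain on which $\rho_{\partial,\Delta}$ is defined; this is where the regularity assumptions on $a_0,\dots,a_N$ must be invoked most carefully, and it is the main reason the statement is offered here as a conjecture rather than a theorem.
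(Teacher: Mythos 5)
First, a point of comparison: the paper does not prove this statement at all --- it is offered explicitly as a conjecture, with only a heuristic remark (the isomorphism of analytic solution spaces from Theorems \ref{main1} and \ref{main2}) as motivation. So there is no ``paper's proof'' to match your proposal against; what you have written is a strategy for settling an open problem, and it should be judged on its own merits.

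On those merits there is a genuine gap. Your construction produces $\mathcal{G}$ exclusively by conjugating the \emph{continuous} Lie point symmetry group through $\rho_{\partial,\Delta}$, so $\mathcal{G}$ can be no larger than that group. For the nonlinear family \eqref{nonlincont} this is fatal in general: equations such as Painlev\'e I \eqref{eq:PainI} --- which the paper itself discretizes into \eqref{eq:57} --- admit \emph{no} nontrivial Lie point symmetries, so your conjugation yields only the identity and the conjecture is not established for precisely the examples of greatest interest. What your argument actually targets is the paper's \emph{second} conjecture (that the discrete symmetry algebra contains a subalgebra isomorphic to the continuous point-symmetry algebra), and even there only as an upper-bound-from-below; the first conjecture asserts the existence of a genuinely nonlocal group on the discrete side, which the paper's surrounding discussion suggests is expected to exist independently of, and generally to exceed, the transported continuous symmetries. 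A second, smaller gap: the correspondence of Theorems \ref{main1}--\ref{main2} is a bijection only between solutions analytic at $x=0$ and their images under \eqref{eq:3.9}/\eqref{invtr}; the difference equations \eqref{eqlin} and \eqref{eqnonlin} may possess solutions outside this sector, and a symmetry group of the \emph{map} must preserve the full solution set, not just the transported one. You correctly flag the analytic issues (well-definedness of the flow on sequence space, preservation of analyticity under $\exp(\varepsilon X)$ when $\xi\neq 0$), but the structural obstruction above precedes them.
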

This conjecture relies on the idea that, for the ODEs considered, the determining
linear equations generating the Lie symmetries in the continuous case would
be preserved by our discretization scheme, as well as their solutions.
The nonlocal character of the diffeomorphisms generating these symmetries
is suggested by the intrinsic nonlocality of our $*$-product.

However, in general, objects categorically equivalent do not possess isomorphic
fundamental solution sets, with the exception of linear equations with
constant coefficients, as discussed in Section~\ref{sec:6}. Consequently, the full
Lie algebras generated by the symmetries postulated in the previous conjecture,
are, in general, not isomorphic. The subsequent conjecture establishes
a weaker form of correspondence between these algebras.

\begin{conjecture}
\label{conj2}
The Lie algebra of the generators of the Lie group $\mathcal{G}$ of the
nonlocal symmetry diffeomorphisms associated with the maps  \eqref{eqlin}  and  \eqref{eq:8.4} (resp.~\eqref{eqnonlin} and  \eqref{eq:8.14}) contains a subalgebra which is isomorphic to the Lie algebra
of the classical Lie point symmetries of the continuous dynamical system  \eqref{lincont} (resp.~\eqref{nonlincont}).
\end{conjecture}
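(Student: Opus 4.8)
The plan is to transport the classical point symmetries of the continuous system to the discrete side by conjugation with the interpolating morphism $\rho_{\partial,\Delta}$, exploiting the fact---established in Theorems \ref{main1} and \ref{main2}---that $\rho_{\partial,\Delta}$ restricts to a bijection between the analytic solutions of the continuous equation and the solutions of the associated map. Concretely, let $\{g_{\epsilon}\}_{\epsilon}$ be a one-parameter group of Lie point symmetries of \eqref{lincont} (resp. \eqref{nonlincont}), so that $y\mapsto g_{\epsilon}\cdot y$ sends analytic solutions to analytic solutions. I would define the transported family $\tilde g_{\epsilon}:=\rho_{\partial,\Delta}\circ g_{\epsilon}\circ\rho_{\partial,\Delta}^{-1}$ acting on the discrete solution sequences $u_{n}$. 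Because $\rho_{\partial,\Delta}$ is a bijection on solutions and $g_{\epsilon}$ preserves the continuous solution set, each $\tilde g_{\epsilon}$ automatically carries solutions of \eqref{eqlin} (resp. \eqref{eqnonlin}) to solutions, which is the defining requirement for membership in the group $\mathcal{G}$ postulated in the first conjecture.

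The second step is to organize this transport at the level of Taylor data, where the correspondence is most transparent. Both solution spaces are coordinatized by the sequence of expansion coefficients $\{b_{k}\}$: on the continuous side through $y(x)=\sum_{k}b_{k}x^{k}$, on the discrete side through the inverse interpolating transform \eqref{invtr}. The point symmetry $g_{\epsilon}$ induces a flow $b_{k}\mapsto b_{k}(\epsilon)$ on coefficient space, and the \emph{same} flow, pushed forward through the linear interpolating isomorphism $u_{n}=\sum_{k}b_{k}\,p_{k}(n)$, is precisely $\tilde g_{\epsilon}$. Differentiating at $\epsilon=0$ then yields, for every generator $X=\xi(x,y)\partial_{x}+\eta(x,y)\partial_{y}$ of the continuous algebra, a generator $\tilde X=(\rho_{\partial,\Delta})_{*}X$ of a nonlocal vector field on the space of the map. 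Since conjugation by a fixed linear isomorphism preserves commutators, the assignment $X\mapsto\tilde X$ is a Lie algebra homomorphism, and it is injective because $\rho_{\partial,\Delta}$ is injective on solutions; hence its image is a subalgebra of the symmetry algebra of the map, isomorphic to the Lie algebra of classical point symmetries, as claimed.

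The third step is to verify that each $\tilde X$ is genuinely a \emph{nonlocal} symmetry rather than merely a permutation of the solution set, i.e. that it integrates to a diffeomorphism of the ambient difference-jet space leaving the map invariant off-shell. Here the nonlocality is structural: because the transform \eqref{invtr} expresses $b_{n}$ as a sum over all of $u_{0},\ldots,u_{n}$, a transformation that is local in $(x,y)$ is pulled back to one acting on $u_{n}$ through the entire history $u_{0},\ldots,u_{n}$, in accordance with the nonlocal form \eqref{nonlocal}. I would make this precise within a prolongation formalism for nonlocal difference equations, checking the infinitesimal invariance condition $\mathrm{pr}\,\tilde X(E)\big|_{E=0}=0$, where $E=0$ denotes the map \eqref{eqlin} (resp. \eqref{eqnonlin}).

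The main obstacle will be precisely this last step: bridging the elementary ``solutions to solutions'' property---immediate from Theorems \ref{main1}--\ref{main2}---with the stronger, off-shell notion of an invariant vector field needed to speak of a symmetry Lie algebra at all. This requires a well-defined prolongation and invariance calculus for the nonlocal recurrences \eqref{eqlin} and \eqref{eqnonlin}, which is not developed in the present work and on which the very existence statement of the first conjecture rests. A secondary subtlety, already flagged in the remark preceding the conjectures, is that one obtains only a \emph{subalgebra}: for variable coefficients or in the nonlinear case the discrete map may carry additional nonlocal symmetries with no continuous counterpart, and the fundamental solution sets need not coincide, so equality of the full symmetry algebras is not to be expected outside the constant-coefficient linear setting treated in Section~6.
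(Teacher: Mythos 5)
The statement you are addressing is presented in the paper as a \emph{conjecture}: the authors give no proof of it, so there is no argument of theirs to compare yours against. The only question is whether your proposal actually settles the claim, and it does not --- as you yourself concede in your final paragraph, which is the most accurate part of the submission.

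The essential gap is the passage from an action on the solution set to an action on an ambient space. Theorems \ref{main1} and \ref{main2} provide a bijection between analytic solutions of the continuous equation and solutions of the map, so the conjugated family $\tilde g_{\epsilon}=\rho_{\partial,\Delta}\circ g_{\epsilon}\circ\rho_{\partial,\Delta}^{-1}$ is defined only \emph{on-shell}. To speak of ``the Lie algebra of the generators of the Lie group $\mathcal{G}$ of nonlocal symmetry diffeomorphisms'' one needs vector fields defined off-shell on a difference-jet or sequence space, a prolongation, and an infinitesimal invariance criterion for nonlocal recurrences of the form \eqref{nonlocal}; none of this is constructed here, and the existence of $\mathcal{G}$ itself is only the content of the first conjecture, equally unproven. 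Until those objects exist, the assertion that ``conjugation preserves commutators'' has nothing to apply to. There is also a second, more concrete obstruction to your coefficient-space argument: a genuine Lie point symmetry generally transforms the independent variable, $x\mapsto\tilde x(x,y;\epsilon)$, and therefore need not preserve analyticity at $x=0$ nor fix the expansion point of the Taylor data $\{b_{k}\}$. Since the interpolating transform \eqref{invtr} is anchored at $x=0$, the induced flow $b_{k}\mapsto b_{k}(\epsilon)$ can fail to exist for part of the symmetry group (dilations, projective transformations, and translations that exit the disc of convergence are typical offenders), so at best your construction yields a subalgebra of the isotropy algebra of the base point --- a weaker statement than the one conjectured. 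Your overall strategy --- transport through coefficient space and identify the image as a subalgebra --- is the natural line of attack and a reasonable research programme, but as written it is a programme, not a proof.
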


\section{New integrable maps from ODEs with regular solutions}
\label{sec:9}

As an illustration of the usefulness of the categorical approach developed
in this work, we propose here some examples of integrable maps constructed
according to the main  Theorems~\ref{main1} and \ref{main2}, concerning
differential equations admitting regular solutions. These maps are associated
with ODEs that are particularly relevant in the applications.

\subsection{A discrete damped harmonic oscillator}
\label{sec10.1}

The damped harmonic oscillator
\begin{equation}
\label{damp}
y''(x)+2 q \omega y'(x)+\omega ^{2} y(x)=0
\end{equation}
in the case $q< 1$ admits the general solution
\begin{equation}
y_{\mathrm{gen}}(x)=A e^{-q \omega x}\sin \left ( \sqrt{1-q^{2}} \,
\omega x+ \phi \right ) \ .
\label{eq10.2}
\end{equation}
The \textit{discrete damped oscillator} reads
\begin{equation}
u_{n+2} +2(q \omega -1) u_{n+1} +(\omega ^{2}-2q\omega +1)u_{n} = 0
\ ,
\label{eq10.3}
\end{equation}
with the general solution
\begin{equation}
u_{n}=\sum _{k=0}^{n}g(k)\frac{n!}{(n-k)!},
\qquad
g(k):=\frac{1}{k!}\left [\frac{d}{dx^{k}} y_{\mathrm{gen}}\right ]_{x=0}
\ .
\label{eq10.4}
\end{equation}
We shall consider as lattice sites the points $nh$,
$h\in \mathbb{R}$. In this mesh, the operator $\Delta _{h}$ is defined
as $\Delta _{h} u_{n}=\frac{u_{n+1}-u_{n}}{h}$, with basic polynomials
\begin{equation}
\label{eq:10.5}
p_{k}(x)=\prod _{j=0}^{k-1}(x-jh),
\end{equation}
possessing zeros at the sites of the lattice. Thus:
\begin{equation}
p_{k}(nh)=
\begin{cases}
0 & \mathrm{if}\; n<k,
\\
\dfrac{n!h^{k}}{(n-k)!} & \mathrm{if}\; n \geq k
\label{pol2}
\end{cases}
\end{equation}
and the inverse interpolating transform reads
\begin{equation}
\label{eq:6.29}
u_{n}= \sum _{k=0}^{n} \frac{n!h^{k}}{(n-k)!} \zeta _{k},\quad \zeta _{k}=
\frac{1}{h^{k}}\sum _{j=0}^{k} \frac{(-1)^{k-j}}{j!(k-j)!}u_{j}.
\end{equation}
Therefore, the discrete damped oscillator equation is:
\begin{equation}
u_{n+2} +2(h q \omega -1) u_{n+1} +(h^{2}\omega ^{2}-2 h q\omega +1)u_{n}
= 0.
\label{eq10.8}
\end{equation}
Its continuum limit (when $h\to 0$, $n\to \infty $ and $nh$ remains bounded)
is the equation  \eqref{damp}, as can be easily checked.

\subsection{A difference equation for the Gaussian function}
\label{sec10.2}

 We discretize the differential equation satisfied by the Gaussian function:
\begin{equation}
\label{eq:9.1}
y'(x)=-x y(x).
\end{equation}
The solution (up to a factor) has the following Taylor expansion:
\begin{equation}
\label{eq:gauss}
y(x)=e^{-\frac{x^{2}}{2}}=\sum _{k=0}^{\infty}
\frac{(-1)^{k}}{2^{k}k!}x^{2k}.
\end{equation}

According to  Theorem~\ref{main1}, in a uniform lattice of step size
$h$ the associated integrable map reads
\begin{equation}
\label{eq:diff1}
u_{n+1}-u_{n}+h^{2} nu_{n-1}=0.
\end{equation}
It admits the solution ($u_{0}=1$):
\begin{equation}
\label{eq:9.3}
u_{n}=\sum _{k=0}^{\infty}\frac{n!\,h^{k}\,\zeta _{k}}{(n-k)!}=\sum _{k=0}^{[n/2]}
\frac{(-1)^{k} n!\, h^{2 k}}{2^{k} k!\,(n-2 k)!},
\end{equation}
where the coefficients $\zeta _{k}$ are given by the Taylor expansion in  \eqref{eq:gauss}. Here $[n/2]$ denotes the integer part of $n/2$.

\begin{remark}
\label{rem7}
The purpose of this simple analysis is to illustrate, in a transparent
way and using a basic example, the non-standard form of the difference
equations obtained through our categorical discretization. In fact, a classical
discretization of eq.~\eqref{eq:9.1} would lead to the difference equation
\begin{equation}
\label{eq:diff2}
\hat{u}_{n+1}-\hat{u}_{n}+h^{2}\,n\,\hat{u}_{n}=0.
\end{equation}
However, it is straightforward to verify that this equation does not admit
a solution of the form  \eqref{eq:9.3}, which arises directly from the Gaussian
function. The equation  \eqref{eq:diff2} admits an exact solution which
can be easily obtained ($\hat{u}_{0}=1$):
\begin{equation}
\hat{u}_{n}=\prod _{k=0}^{n-1}(1-k\,h^{2}),\quad n\ge 1
\label{eq10.14}
\end{equation}

In both cases (categorical and standard discretizations), the resulting
difference equations  \eqref{eq:diff1} and  \eqref{eq:diff2} tend to the
differential equation  \eqref{eq:gauss} in a straightforward way (we divide
by $h$ and take the continuum limit, $h\to 0$, $n\to \infty $, keeping
$n h$ constant). It can also be verified that the discretized solutions
provide a very good approximation of the exponential function.
\end{remark}

\subsection{Hypergeometric equation}
\label{sec10.3}

The real hypergeometric differential equation
\begin{equation}
x(1-x)\frac{d^{2} y}{dx^{2}}+\left [c-(a+b+1)x\right ]\frac{dy}{dx}-ab
y=0,
\label{hde}
\end{equation}
where $a,b,c\in \mathbb{R}$, possesses three regular singular points
$\{0,1, \infty \}$ \cite{AS}. By way of example, we shall restrict to the
singularity at $x=0$. Around this point,  eq.~\eqref{hde} admits two algebraically
independent solutions,
\begin{equation}
s_{1}:={}_{2} F_{1}(a,b;c;x),
\label{eq10.16}
\end{equation}
and a second solution $s_{2}$ whose explicit form depends on the particular
values of the parameters.  Theorem~\ref{main1} provides the following discrete
version of  eq.~\eqref{hde}
\begin{eqnarray}
&&(n+c) u_{n+1}-(n^{2}+(a+b+2)n+ab+c)u_{n}
\label{eq10.17}
\\
&&\qquad +n( a+b+2n )u_{n-1} -n(n-1) u_{n-2}
\nonumber
=0.
\nonumber
\label{hyperd}
\end{eqnarray}
If we introduce the finite Gauss sum
\begin{equation}
G(a,b;c;n):=\sum _{k=0}^{n}\frac{(a)_{k} (b)_{k}}{(c)_{k}}
\frac{n!}{(n-k)!},
\label{eq10.18}
\end{equation}
the solution of  eq.~\eqref{hyperd} corresponding to $s_{1}$ is provided
by
\begin{equation}
u^{(1)}_{n}:=G(a,b;c;n).
\label{eq10.19}
\end{equation}

A numerical analysis of this solution shows that this solution is in excellent
agreement with the continuous one.

Theorem~\ref{main1} allows us to introduce integrable maps which discretize
the differential equations characterizing families of orthogonal polynomials.
Here we shall focus on the nontrivial case of the equation defining the
classical family of Jacobi.

\subsection{A discrete Jacobi equation}
\label{sec10.4}

Applying Theorem~\ref{main1}, the equation
\begin{equation}
(1-x^{2})y''(x)+\left [\beta -\alpha -(\alpha +\beta +2)x\right ]y'(x)+m(m+
\alpha +\beta +1)y(x)=0.
\end{equation}
can be discretized in the form
\begin{eqnarray}
\nn
u_{n+2} - (\alpha -\beta +2)u_{n+1} +\left (m^{2}-n^{2}+(m-n) (
\alpha +\beta +1) + \alpha -\beta +1\right )u_{n}
\label{eq10.21}\quad
\\
+ n (\alpha +\beta +2 n)u_{n-1} -n(n-1) u_{n-2}=0.
\end{eqnarray}

This equation admits as a solution the family of polynomials
$\{P_{m}^{(\alpha ,\beta )}(n)\}_{m\in \mathbb{N}}$, where
\begin{eqnarray*}
P_{m}^{(\alpha ,\beta )}(n)&:=&
\frac{\Gamma (\alpha +m+1)}{m!\Gamma (\alpha +\beta +m+1)}
\\
&\times & \sum _{k=0}^{m}\binom{m}{k}
\frac{\Gamma (\alpha +\beta +m+k+1)}{\Gamma (\alpha +k+1)}
\frac{n! }{2^{k}}\sum _{i=0}^{k} \binom{k}{i}
\frac{(-1)^{k-i} }{(n-i)!}.
\nonumber
\end{eqnarray*}
Virtually all classes of orthogonal polynomials, including nonclassical
and Sobolev-type ones, can be associated with integrable maps in an analogous
way. The analysis of other similar cases is left to the reader.

\subsection{A new discrete Painlev\'e I equation}
\label{sec10.5}

Let us discuss the interesting case of the discretization of the classical
Painlev\'e I equation. Let us start with the differential equation
\begin{equation}
\frac{d^{2} y}{dy^{2}} - c_{2}(x) y^{2}- c_{1}(x) y - c_{0}(x) =0.
\label{eq10.22}
\end{equation}
According to Theorem~\ref{main2}, for $m=2$, and $N=2$, the associated
difference equation is:
\begin{eqnarray}
&&u_{n+2}-2u_{n+1}+u_{n}
\nonumber
\\
&& - \sum _{s,j_{1},j_{2}=0}^{n}\beta _{2s}
\frac{(-1)^{n-s -j_{1}-j_{2} } n!}{ (n-s-j_{1}-j_{2})! }
\frac{u_{j_{1}}u_{j_{2}}}{j_{1}!j_{2}!}
\nonumber
\\
&& - \sum _{s,j_{1}=0}^{n}\beta _{1s} \frac{ n!}{ (n-s-j_{1})! }
\frac{u_{j_{1}}}{j_{1}!} - \sum _{s=0}^{n}\beta _{0s }
\frac{n!}{(n-s )!} =0.
\label{eq10.23}
\end{eqnarray}

In particular, for the Painlev\'e I equation
\begin{equation}
\label{eq:PainI}
y''=6y^{2}+x,
\end{equation}
we have
\begin{equation}
c_{0}(x)=x,\quad c_{1}(x)=0,\quad c_{2}(x)=6.
\label{eq10.25}
\end{equation}
As is well-known, the Painlev\'e I equation admits an analytic solution
around $x=0$, with a local expansion given by
\begin{equation}
y(x)= \zeta _{0} + \zeta _{1} x+3 \zeta _{0}^{2} x^{2}+ \Big(2 \zeta _{0}
\zeta _{1}+\frac{1}{6}\Big) x^{3}+ \Big(\frac{1}{2} \zeta _{1} ^{2}+ 3
\zeta _{0}^{3}\Big) x^{4} + \cdots
\label{eq10.26}
\end{equation}
Here $\zeta _{0}=y(0)$ and $\zeta _{1} = y'(0)$.

Consequently, by applying  Theorem~\ref{main1}, we can introduce the novel
\textit{discrete Painlev\'e I equation}
\begin{equation}
\label{eq:57}
u_{n+2}-2u_{n+1}+u_{n} - 6\sum _{j_{1},j_{2}=0}^{n}
\frac{(-1)^{n -j_{1}-j_{2} } n!}{ (n-j_{1}-j_{2})! }
\frac{u_{j_{1}}u_{j_{2}}}{j_{1}!j_{2}!} - n=0.
\end{equation}
To check the continuum limit of  eq.~\eqref{eq:57}, we introduce in the
semi-axis $x\ge 0$ a uniform mesh $\mathcal{L}_{h}$ of points
\begin{equation}
x_{n}=nh,\quad n\in \mathbb{N},
\label{eq10.28}
\end{equation}
where $h>0$. The continuum limit is defined by $h\to 0$,
$n\to \infty $ and $nh$ bounded.

By using formulas  \eqref{eq:10.5}--\eqref{eq:6.29}, we rewrite eq.
\eqref{eq:57} in terms of the mesh amplitude $h$. We get
\begin{equation}
\label{eq:58}
\frac{1}{h^{2}}\left (u_{n+2}-2u_{n+1}+u_{n}\right ) - 6\sum _{j_{1},j_{2}=0}^{n}
\frac{(-1)^{n -j_{1}-j_{2} } n!}{ (n-j_{1}-j_{2})! }
\frac{u_{j_{1}}u_{j_{2}}}{j_{1}!j_{2}!} - nh=0.
\end{equation}
A straightforward analysis shows that in the double limit $h\to 0$,
$n\to \infty $ ($nh$ bounded),  eq.~\eqref{eq:58} reduces to the Painlev\'e
I equation  \eqref{eq:PainI}.

The first equations $n=0,1,2,3$ are explicitly given by:
\begin{eqnarray}
&&-6 h^{2} u_{0}^{2}+u_{0}-2 u_{1}+u_{2}=0,
\label{sys}
\\
\nonumber
&&-h^{3}-6 h^{2} u_{0} (2 u_{1}-u_{0})+u_{1}-2 u_{2}+u_{3}=0,
\\
\nonumber
&&-2 h^{3}-6 h^{2} \left (u_{0}^{2}+(2 u_{2}-4 u_{1}) u_{0}+2 u_{1}^{2}
\right )+u_{2}-2 u_{3}+u_{4}=0,
\\
\nonumber
&&-3 h^{3}+6 h^{2} \left (u_{0}^{2}-2 (3 u_{1}-3 u_{2}+u_{3}) u_{0}+6 u_{1}
(u_{1}-u_{2})\right )+u_{3}-2 u_{4}+u_{5}.
\end{eqnarray}
An exact solution of this equation is:
\begin{equation}
u_{n}=\sum _{k=0}^{n}\zeta _{k}p_{k}(n)=n!\sum _{k=0}^{n}
\frac{h^{k}\zeta _{k}}{(n-k)!}.
\label{eq10.31}
\end{equation}
Since $\zeta _{0}=u_{0}$, $\zeta _{1}=(u_{1}-u_{0})/h$, we have
\begin{eqnarray*}
u_{2}&=&6 h^{2} u_{0}^{2}-u_{0}+2 u_{1},
\\
u_{3}&=&h^{3}+6 h^{2} u_{0} (u_{0}+2 u_{1})-2 u_{0}+3 u_{1},
\\
u_{4}&=&72 h^{4} u_{0}^{3}+4 h^{3}+12 h^{2} u_{1} (2 u_{0}+u_{1})-3 u_{0}+4
u_{1}.
\end{eqnarray*}
One can easily check that these functions satisfy the system \eqref{sys}.

\section{New integrable maps admitting singular solutions}
\label{sec:10}

We shall construct several new maps obtained as an application of the results
proposed in Theorems~\ref{main3} and \ref{main4}.

\subsection{A linear model with two polar singularities}
\label{seis}

The differential equation
\begin{equation}
y'' +\frac{3}{x} y'+\frac{1}{x} y=\frac{1}{x^{2}}
\label{eq11.1}
\end{equation}
admits a general solution which can be written in terms of Bessel functions,
together with a particular solution of the inhomogeneous equation, which
may be chosen as:
\begin{equation}
\label{sol3}
y(x)=\frac{1}{x}+\frac{1}{x^{2}}.
\end{equation}
The corresponding difference equation can be easily obtained from  Theorem~\ref{main3}. We have
\begin{equation}
u_{n+2}-2u_{n+1}+u_{n}+\frac{3}{n}(u_{n+2}-u_{n+1})+\frac{1}{n }u_{n+1}
-\frac{1}{n(n+1)}=0.
\label{eq11.3}
\end{equation}
It admits the solution
\begin{equation}
u_{n}=\frac{n+2}{n(n+1)},
\label{eq11.4}
\end{equation}
constructed directly from solution \eqref{sol3}:
\begin{equation}
\frac{1}{x}+\frac{1}{x^{2}}\to p_{-1}(n)+p_{-2}(n)=\frac{1}{n}+
\frac{1}{n(n+1)}.
\label{eq11.5}
\end{equation}
If we introduce a step $h$ for the lattice, the equation is:
\begin{equation}
(n+1) \left [(n+3 )u_{n+2}-(2n+3-h) u_{n+1}+ n u_{n}\right ] =1,
\label{eq11.6}
\end{equation}
with solution
\begin{equation}
u_{n}=\frac{(n+1)h+1}{n (n+1) h^{2}}.
\label{eq11.7}
\end{equation}

\subsection{A nonlinear dynamical system admitting a triple-pole solution}
\label{sec11.2}

Consider the nonlinear equation:
\begin{equation}
\label{cont}
(ax+b)^{2}y''-6x(ax+2b)\, y^{2} =0, \qquad a,b\in \mathbb{R}
\end{equation}
which admits the solution
\begin{equation}
\label{eq:10.16}
y(x)= \frac{a}{x^{2}}+ \frac{b}{x^{3}}.
\end{equation}
For sake of clarity, we shall derive its discrete version in a thorough
way. This solution has the discrete counterpart given by
\begin{equation}
y(x)\to u_{n}=\sum _{k=-3}^{-2}\zeta _{k}p_{k}(n).
\label{eq11.10}
\end{equation}
Thus, we deduce the following relations.
\begin{eqnarray*}
xy^{2} &\to & p_{1}(n)*\left (\sum _{k=-3}^{-2}\zeta _{k}p_{k}(n)
\right )*\left (\sum _{k=-3}^{-2}\zeta _{k}p_{k}(n)\right )
\\
&=&\zeta _{-2}^{2}p_{-3}(n)+2\zeta _{-2}\zeta _{-3}p_{-4}(n)+\zeta _{-3}^{2}p_{-5}(n),
\\
x^{2}y^{2} &\to & p_{2}(n)*\left (\sum _{k=-3}^{-2}\zeta _{k}p_{k}(n)
\right )*\left (\sum _{k=-3}^{-2}\zeta _{k}p_{k}(n)\right )
\\
&=&\zeta _{-2}^{2}p_{-2}(n)+2\zeta _{-2}\zeta _{-3}p_{-3}(n)+\zeta _{-3}^{2}p_{-4}(n),
\\
xy'' &\to & p_{1}(n)*\sum _{k=-3}^{-2}k(k-1)\zeta _{k}p_{k-2}(n) = (n-1)(u_{n+1}-2u_{n
}+u_{n-1}),
\\
x^{2}y'' &\to & p_{2}(n)*\sum _{k=-3}^{-2}k(k-1)\zeta _{k}p_{k-2}(n)= (n-1)(n-2)(u_{n}-2u_{n-1}+u_{n-2}).
\end{eqnarray*}

The coefficients $\zeta _{-2}$ and $\zeta _{-3}$ can be written in terms
of $u_{n}$, $u_{n+1}$ solving the linear system:
\begin{equation}
\begin{cases}
u_{n}=\zeta _{-3} p_{-3}(n)+\zeta _{-2} p_{-2}(n)
\\
u_{n+1}=\zeta _{-3} p_{-3}(n+1)+\zeta _{-2} p_{-2}(n+1)
\end{cases}
,\quad p_{k}(n)=\frac{(n-1)!}{(n-k-1)!},\quad k< 0
\label{eq11.11}
\end{equation}
whose solution reads
\begin{equation}
\begin{cases}
\zeta _{-2}= - n(n+1) (n+2) u_{n}+(n+1)(n+2)(n+3) u_{n+1}
\\
\zeta _{-3}= n(n+1) (n+2) (n+3) u_{n}-(n+1) (n+2)^{2} (n+3) u_{n+1}.
\end{cases}
\label{eq11.12}
\end{equation}
Then, the equation \eqref{cont} admits the following categorically equivalent
discretization:
\begin{eqnarray}
\nonumber
&&b^{2} u_{n+2} +\frac{6(n+1)(n+2)(n+3)(a (n+2) (n+4)+2 b n)}{n(n+4)}u_{n+1}^{2}
+2b(a(n-1)-b)u_{n+1}
\\
\nonumber
&-&\frac{12(n+1)(n+2)(a(n+3)(n+4)+2 b(n+2))}{n+4}u_{n} u_{n+1} \\
\nonumber
&+&
\frac{6n(n+1)(n+2)(a (n+4)+2 b)}{n+4}u_{n}^{2}
\\
\nonumber
&+&(a(n-1)(a(n-2)-4b)+b^{2})u_{n}-2a(n-1)(a(n-2)-b)u_{n-1} \\
\nonumber
&+&a^{2}(n-2)(n-1)u_{n-2}
=0.
\end{eqnarray}

It is straightforward to prove that it admits the categorical counterpart
of solution  \eqref{eq:10.16}:
\begin{equation}
u_{n}=\frac{a}{n(n+1)}+\frac{b}{n(n+1)(n+2)}.
\label{eq11.13}
\end{equation}

\section{Future perspectives}
\label{sec12}

The categorical discretization approach developed in this article is primarily
intended to provide an efficient method for constructing integrable maps
from integrable continuous systems. However, it should be emphasized that,
although the general framework introduced here finds its natural origin
in the theory of integrable systems, it is not restricted to this specific
field.

In fact, it can, in principle, be regarded as a general discretization
approach that remains applicable even when exact solutions of the underlying
ODEs are not explicitly known. In such cases, the resulting difference
equations may be interpreted as discrete approximations of the continuous
models, and can thus be used to construct approximate solutions.

In this article, we have focused on the case of scalar ODEs as a first
essential step toward further generalizations. At present, we do not foresee
any technical obstacles to extending the framework to vector-valued ODEs,
and we intend to pursue this direction in future work. Extending the categorical
approach to partial differential equations also constitutes one of the
main objectives of our ongoing research. This fundamental problem could
in principle be addressed using the same ideas,
\textit{mutatis mutandis}.

The relation between the present approach and the long-standing problem
of the symmetry-preserving discretization of differential equations constitutes
an intriguing and worthwhile line of research.

A thorough investigation of this relationship will be carried out in future
research, encompassing both ordinary differential equations (ODEs) and
partial differential equations (PDEs) (for instance, by extending some of the results of \cite{Tempesta2} to the categorical framework) in order to assess its theoretical
implications as well as its practical applicability.

An interesting direction for future research is the extension of the present
theory to the case of $q$-difference equations. A coherent Galois approach
to this class of equations has been developed in \cite{diV},
\cite{Sauloy}. We hypothesize that a functorial correspondence can also
be envisaged for this case, allowing a unified treatment of the Galois
theory for differential equations and their categorically equivalent
$q$-difference equations.

As a broader objective of our work, we conjecture that the study of quantum
field theories on a lattice - one of the most significant developments
in high-energy physics in recent decades - could be further enhanced through
a categorical approach. Work along these research directions is currently
underway.

\appendix
\section{Some combinatorial relations}

\label{AppendixA}

The following identities are used in the proofs of the main theorems:
\begin{equation}
\sum _{k=0}^{n}(-1)^{k}\binom{n}{k} =\delta _{n0},\quad \sum _{k=0}^{r}
\binom{s}{k}\binom{m}{r-k}=\binom{m+s}{r},
\label{eqA.1}
\end{equation}
\begin{equation}
\label{delt}
\sum _{k=0}^{s}(-1)^{s-k}\binom{s}{k}\binom{n+k}{m+s}=\binom{n}{m},
\end{equation}
\begin{equation}
\sum _{k=j}^{n+s-r} (-1)^{k-j} \binom{n-r}{k-s}\binom{k}{j}=\sum _{i=0}^{s}(-1)^{s-i}
\binom{s}{i}\delta _{n-r+i,j}.
\label{eqA.3}
\end{equation}

\section{The inverse transform}
\label{AppendixB}

We revise here the construction of the inverse transform. Consider the
$\Delta $ operator and its sequence of basic polynomials computed in the
lattice $\mathcal{L}$. If
\begin{equation}
u_{n}=\sum _{k=0}^{n} \zeta _{k}p^{+}_{k}(n)=\sum _{k=0}^{n}
\frac{n!\zeta _{k}}{(n-k)!},
\label{eqB.1}
\end{equation}
then the Fourier coefficients $\zeta _{k}$ are given by
\cite{TempestaJDE}:
\begin{equation}
\label{eq:B2}
\zeta _{k}=\sum _{j=0}^{k} \frac{(-1)^{k-j}u_{j}}{j!(k-j)!}.
\end{equation}

A crucial point in the proof of this property is that the expression of
$u_{n}$ in terms of the sequence of basic polynomials has a finite number
of addends in the sum, since these polynomials have their zeros at the
lattice points. However, when we consider a series of Laurent polynomials,
the expansions have generically an infinite number of terms (in the Laurent
part):
\begin{equation}
u_{n}=\sum _{k=1}^{\infty} \frac{\zeta _{-k}}{p^{-}_{k}(n)}+\sum _{k=0}^{n}
\zeta _{k} p^{+}_{k}(n).
\label{eqB.3}
\end{equation}
This formula cannot be inverted as in the case of  eq.~\eqref{eq:B2} and,
consequently, the coefficients $\zeta _{-k}$ do not have a general expression
as functions of $u_{n}$.

\section{Categorical correspondences}
\label{AppendixC}

Here we show explicitly some technical details, relevant in the proofs
of the main theorems, of the way the categorical correspondence acts on
several algebraic and differential structures.

\bigskip
\noindent
\textit{Differentials and products}
\begin{eqnarray}
x^{r}\frac{\mathrm{d}^{s}}{\mathrm{d}x^{s}}y(x)&=&\sum _{k=0}^{\infty}
\frac{k!}{(k-s)!} \zeta _{k} x^{k+r-s} \longrightarrow \sum _{k=0}^{n-r+s}
\frac{k!}{(k-s)!}\zeta _{k} p_{k+r-s}(nh)
\nonumber
\\
&=&\sum _{k=0}^{n-r+s}
\frac{n!k!h^{k+r-s} \zeta _{k} }{(k-s)!(n-k-r+s)!}.
\label{eqC.1}
\end{eqnarray}
Using the identity  \eqref{delt} we get
\begin{eqnarray}
&&x^{r}\frac{\mathrm{d}^{s}}{\mathrm{d}x^{s}}y(x) \longrightarrow
\frac{ n!h^{r-s}}{(n-r)!}\sum _{k=0}^{n-r+s}\sum _{i=0}^{s}(-1)^{s-i}k!
\binom{s}{i}\binom{n-r+i}{k} \zeta _{k}
\label{eqC.2}
\\
&=&
\nonumber
\frac{ n!h^{r-s}}{(n-r)!}\sum _{i=0}^{s}(-1)^{s-i}\binom{s}{i}\sum _{k=0}^{n-r+i}
\frac{(n-r+i)!}{ (n-r+i-k)!} \zeta _{k}
\\
&=&
\nonumber
\frac{ n!h^{r }}{(n-r)!}\frac{1}{h^{s}}\sum _{i=0}^{s}(-1)^{s-i}
\binom{s}{i}u_{n-r+i}=\frac{ n!h^{r}}{(n-r)!}\Delta ^{s}_{h}u_{n-r}.
\end{eqnarray}
\textit{Proof of equations  \eqref{eq:3.24} and  \eqref{eq:3.25} in Theorem~\ref{main2}}.

Let us write the term $y(x)^{r}$ as a multiple series. We get
\begin{equation}
\label{eq:B3}
y(x)^{r}=\sum _{k_{1}=0}^{\infty}x^{k_{1}}\sum _{k_{2}=0}^{k_{1}}
\zeta _{k_{1}-k_{2}} \sum _{k_{3}=0}^{k_{2}}\zeta _{k_{2}-k_{3}}
\cdots \sum _{k_{r-1}=0}^{k_{r-2}}\zeta _{k_{r-2}-k_{r-1}}\sum _{k_{r}=0}^{k_{r-1}}
\zeta _{k_{r-1}-k_{r}} \zeta _{k_{r}}.
\end{equation}
We will express the coefficients $\zeta _{k}$ in terms of $u_{i}$ by means
of the inverse transforms of eqs. \eqref{eq:6.29}, suitably iterated. First,
we observe that there is no dependence from the mesh spacing $h$, since
the terms involving $h$ satisfy the identity
\begin{equation}
h^{k_{1}}\frac{1}{h^{k_{1}-k_{2}}}\frac{1}{h^{k_{2}-k_{3}}}\cdots
\frac{1}{h^{k_{r-2}-k_{r-1}}}\frac{1}{h^{k_{r-1}-k_{r}}}
\frac{1}{h^{k_{r}}}=1.
\label{eqC.4}
\end{equation}

Starting from the last sum in  \eqref{eq:B3}, we get
\begin{eqnarray}
&&\sum _{k_{r}=0}^{k_{r-1}}\zeta _{k_{r-1}-k_{r}}\zeta _{k_{r}}
\label{eqC.5}
\\
&\to & \sum _{k_{r}=0}^{k_{r-1}}\sum _{j_{r-1}=0}^{k_{r-1}-k_{r}}
\sum _{j_{r}=0}^{k_{r}}
\frac{(-1)^{k_{r-1}-j_{r-1}-k_{r}}}{j_{r-1}!(k_{r-1}-j_{r-1}-k_{r})!}
\frac{(-1)^{k_{r}-j_{r }}}{j_{r}!(k_{r}-j_{r})!}u_{j_{r-1}}u_{j_{r}}
\nonumber
\\
&=& \sum _{j_{r-1}=0}^{k_{r-1}}\sum _{k_{r}=0}^{k_{r-1}-j_{r-1}}\sum _{j_{r}=0}^{k_{r}}
\frac{(-1)^{k_{r-1}-j_{r-1}-j_{r} }}{j_{r-1}!j_{r}!(k_{r-1}-j_{r-1}-k_{r})!(k_{r}-j_{r})!}
u_{j_{r-1}}u_{j_{r}}
\nonumber
\\
&=& \sum _{j_{r-1}=0}^{k_{r-1}}\sum _{j_{r}=0}^{k_{r-1}-j_{r-1}}
\frac{(-1)^{k_{r-1}-j_{r-1}-j_{r} }u_{j_{r-1}}u_{j_{r}}}{j_{r-1}!j_{r}!}
\nonumber
\\
&\times &\sum _{k_{r}=j_{r}}^{k_{r-1}-j_{r-1}}
\frac{1}{(k_{r-1}-j_{r-1}-k_{r})!(k_{r}-j_{r})!}.
\nonumber
\end{eqnarray}
Since
\begin{equation}
\sum _{k_{r}=j_{r}}^{k_{r-1}-j_{r-1}}
\frac{1}{(k_{r-1}-j_{r-1}-k_{r})!(k_{r}-j_{r})!}=
\frac{2^{k_{r-1}- j_{r-1}-j_{r}}}{( k_{r-1}-j_{r-1}-j_{r})!},
\label{eqC.6}
\end{equation}
then we have
\begin{equation}
\label{term1}
\sum _{k_{r}=0}^{k_{r-1}}\zeta _{k_{r-1}-k_{r}}\zeta _{k_{r}}\to
\sum _{j_{r-1}=0}^{k_{r-1}}\sum _{j_{r}=0}^{k_{r-1}-j_{r-1}}
\frac{(-1)^{k_{r-1}-j_{r-1}-j_{r} }2^{k_{r-1}- j_{r-1}-j_{r}}u_{j_{r-1}}u_{j_{r}}}{j_{r-1}!j_{r}!( k_{r-1}-j_{r-1}-j_{r})!}.
\end{equation}
The sums in the next term
\begin{eqnarray}
\label{eq:B8}
&&\sum _{k_{r-1}=0}^{k_{r-2}} \sum _{j_{r-2}=0}^{k_{r-2}-k_{r-1}}
\frac{(-1)^{k_{r-2}-j_{r-2}-k_{r-1}}}{j_{r-2}!(k_{r-2}-j_{r-2}-k_{r-1})!}u_{j_{r-2}}
\\
&\times & \sum _{j_{r-1}=0}^{k_{r-1}} \sum _{j_{r}=0}^{k_{r-1}-j_{r-1}}
\frac{(-1)^{k_{r-1}-j_{r-1}-j_{r} }2^{k_{r-1}- j_{r-1}-j_{r}}u_{j_{r-1}}u_{j_{r}}}{j_{r-1}!j_{r}!( k_{r-1}-j_{r-1}-j_{r})!}
\nonumber
\end{eqnarray}
can be appropriately ordered in the following way;
\begin{eqnarray}
&&\sum _{k_{r-1}=0}^{k_{r-2}} \sum _{j_{r-2}=0}^{k_{r-2}-k_{r-1}}
\sum _{j_{r-1}=0}^{k_{r-1}} \sum _{j_{r}=0}^{k_{r-1}-j_{r-1}} = \sum _{j_{r-2}=0}^{k_{r-2}}
\sum _{j_{r-1}=0}^{k_{r-2}-j_{r-2}} \sum _{j_{r}=0}^{k_{r-2}-j_{r-2}-j_{r-1}}
\sum _{k_{r-1}=j_{r}+j_{r-1}}^{k_{r-2}-j_{r-2}}.
\nonumber
\end{eqnarray}
Thus, we can rewrite the expression  \eqref{eq:B8} as
\begin{eqnarray}
\label{eq:B9}
&&\sum _{j_{r-2}=0}^{k_{r-2}} \sum _{j_{r-1}=0}^{k_{r-2}-j_{r-2}}
\sum _{j_{r}=0}^{k_{r-2}-j_{r-2}-j_{r-1}}
\frac{(-1)^{k_{r-2}-j_{r-2}-j_{r-1}-j_{r}}u_{j_{r-2}}u_{j_{r-1}}u_{j_{r}}}{j_{r-2}!j_{r-1}!j_{r}!}
\\
&\times & \sum _{k_{r-1}=j_{r}+j_{r-1}}^{k_{r-2}-j_{r-2}}
\frac{2^{k_{r-1}-j_{r-1
}-j_{r}}}{( k_{r-1}-j_{r-1}-j_{r})!(k_{r-2}-j_{r-2}-k_{r-1})!}.
\nonumber
\end{eqnarray}
The last sum can be easily computed:
\begin{equation}
\sum _{k_{r-1}=j_{r}+j_{r-1}}^{k_{r-2}-j_{r-2}}
\frac{2^{k_{r-1}-j_{r-1
}-j_{r}}}{( k_{r-1}-j_{r-1}-j_{r})!(k_{r-2}-j_{r-2}-k_{r-1})!}=
\frac{3^{k_{r-2}-j_{r-2}-j_{r-1
}-j_{r}}}{(k_{r-2}-j_{r-2}-j_{r-1}-j_{r})!}
\label{eqC.10}
\end{equation}
and substituting it into  eq.~\eqref{eq:B9}, we obtain the term:
\begin{equation}
\label{term2}
\sum _{j_{r-2}=0}^{k_{r-2}} \sum _{j_{r-1}=0}^{k_{r-2}-j_{r-2}} \sum _{j_{r}=0}^{k_{r-2}-j_{r-2}-j_{r-1}}
\frac{(-1)^{k_{r-2}-j_{r-2}-j_{r-1}-j_{r}}3^{k_{r-2}-j_{r-2}-j_{r-1
}-j_{r}}u_{j_{r-2}}u_{j_{r-1}}u_{j_{r}}}{j_{r-2}!j_{r-1}!j_{r}!(k_{r-2}-j_{r-2}-j_{r-1}-j_{r})!}.
\end{equation}

The comparison between the relations  \eqref{term1} and  \eqref{term2} allows
us to deduce, by analogy, the general relation
\begin{eqnarray}
\\
\nn
&&\sum _{k_{l}=0}^{k_{l-1}}\zeta _{k_{l-1}-k_{l}}\cdots \sum _{k_{r-1}=0}^{k_{r-2}}
\zeta _{k_{r-2}-k_{r-1}}\sum _{k_{r}=0}^{k_{r-1}}\zeta _{k_{r-1}-k_{r}}
\zeta _{k_{r}} =
\label{eqC.12}
\\
\nonumber
&=&\sum _{j_{l-1}=0}^{k_{l-1}} \sum _{j_{l}=0}^{k_{l-1}-j_{l-1}}
\cdots \sum _{j_{r-1}=0}^{k_{l-1}-\sum_{i=l-1}^{r-2}j_i} \;\;\sum _{j_{r}=0}^{k_{l-1}-\sum_{i=l-1}^{r-1}j_i}
\frac{(l-r-2)^{k_{l-1}-\sum _{i=l-1}^{r}j_{i}}}{( k_{l-1}-\sum _{i=l-1}^{r}j_{i})!}
\prod _{i=l-1}^{r} \frac{u_{j_{i}}}{j_{i}!}.
\end{eqnarray}
In particular, for $l=2$, we obtain
\begin{equation}
\sum _{j_{1}=0}^{k_{1}} \sum _{j_{2}=0}^{k_{1}-j_{1}} \cdots \sum _{j_{r-1}=0}^{k_{1}-\sum _{i=1}^{r-2}j_{i}}
\sum _{j_{r}=0}^{k_{1}-\sum _{i=1}^{r-1}j_{i}}
\frac{(-r)^{k_{1}-\sum _{i=1}^{r}j_{i}}}{( k_{1}-\sum _{i=1}^{r}j_{i})!}
\prod _{i=1}^{r} \frac{u_{j_{i}}}{j_{i}!}.
\label{eqC.13}
\end{equation}
By moving the first sum to the end, we arrive at the correspondence
\begin{equation}
y(x)^{r}\to \sum _{k_{1}=0}^{n}\frac{n!}{(n-k_{1})!} \sum _{j_{1}=0}^{k_{1}}
\sum _{j_{2}=0}^{k_{1}-j_{1}} \cdots \sum _{j_{r-1}=0}^{k_{1}-\sum _{i=1}^{r-2}j_{i}}\;\;
\sum _{j_{r}=0}^{k_{1}-\sum _{i=1}^{r-1}j_{i}}
\frac{(-r)^{k_{1}-\sum _{i=1}^{r}j_{i}}}{( k_{1}-\sum _{i=1}^{r}j_{i})!}
\prod _{i=1}^{r} \frac{u_{j_{i}}}{j_{i}!}.
\label{eqC.14}
\end{equation}
Finally, since
\begin{eqnarray}
&&\sum _{k_{1}=0}^{n} \sum _{j_{1}=0}^{k_{1}} \sum _{j_{2}=0}^{k_{1}-j_{1}}\;\;\sum _{j_{3}=0}^{k_{1}-(j_{1}+j_2)}
\cdots \sum _{j_{r-1}=0}^{k_{1}-\sum _{i=1}^{r-2}j_{i}}\;\; \sum _{j_{r}=0}^{k_{1}-\sum _{i=1}^{r-1}j_{i}} =
\label{eqC.15}
\\
&=& \sum _{j_{1}=0}^{n} \sum _{j_{2}=0}^{n-j_1}\sum _{j_{3}=0}^{n-(j_1+j_2)} 
\cdots \sum _{j_{r-1}=0}^{n-\sum _{i=1}^{r-2}j_{i}} \;\;\sum _{j_{r}=0}^{n-\sum _{i=1}^{r-1}j_{i}} \sum _{k_{1}=\sum _{i=1}^{r}{j_{i}}}^{n}
\nonumber
\end{eqnarray}
we are led to the sum
\begin{equation}
\sum _{k_{1}=\sum _{i=1}^{r}{j_{i}}}^{n}
\frac{(-r)^{k_{1}-\sum _{i=1}^{r}j_{i}}}{(n-k_{1})!( k_{1}-\sum _{i=1}^{r}j_{i})!}=
\frac{(1-r)^{n-\sum _{i=1}^{r}j_{i}}}{(n-\sum _{i=1}^{r}j_{i})!}.
\label{eqC.16}
\end{equation}
Consequently, collecting together all the intermediate steps, our final
result is:
\begin{equation}
y(x)^{r}\to \sum _{j_{1}=0}^{n}\;\; \sum _{j_{2}=0}^{n-j_1}\sum _{j_{3}=0}^{n-(j_1+j_2)} \cdots \sum _{j_{r-1}=0}^{n-\sum _{i=1}^{r-2}j_{i}}\;\;\sum _{j_{r}=0}^{n-\sum _{i=1}^{r-1}j_{i}}
\frac{n!(1-r)^{n-\sum _{i=1}^{r}j_{i}}}{(n-\sum _{i=1}^{r}j_{i})!}
\prod _{i=1}^{r} \frac{u_{j_{i}}}{j_{i}!}.
\label{eqC.17}
\end{equation}
By following the same technique, the series expansion of the function
$x^{m}y(x)^{r}$ can be written as:
\begin{equation}
x^{m}y(x)^{r}=\sum _{k_{1}=0}^{\infty}x^{k_{1}+m}\sum _{k_{2}=0}^{k_{1}}
\zeta _{k_{1}-k_{2}} \sum _{k_{3}=0}^{k_{2}}\zeta _{k_{2}-k_{3}}
\cdots \sum _{k_{r-1}=0}^{k_{r-2}}\zeta _{k_{r-2}-k_{r-1}}\sum _{k_{r}=0}^{k_{r-1}}
\zeta _{k_{r-1}-k_{r}}\zeta _{k_{r}}.
\label{eqC.18}
\end{equation}
Its categorically equivalent is:
\begin{eqnarray}
&&x^{m} y(x)^{r}\to \sum _{k_{1}=0}^{n-m}
\frac{n!h^{k_{1}+m}}{(n-m-k_{1})!}\sum _{k_{2}=0}^{k_{1}}\zeta _{k_{1}-k_{2}}
\sum _{k_{3}=0}^{k_{2}}\zeta _{k_{2}-k_{3}}\cdots
\label{eqC.19}
\\
&&\sum _{k_{r-1}=0}^{k_{r-2}}\zeta _{k_{r-2}-k_{r-1}}\sum _{k_{r}=0}^{k_{r-1}}
\zeta _{k_{r-1}-k_{r}}\zeta _{k_{r}}.
\nonumber
\end{eqnarray}
By using the previous results, we get:
\begin{eqnarray}
&&x^{m} y(x)^{r}\to
\label{eqC.20}
\\
&\to & h^{m} \sum _{k_{1}=0}^{n-m}\frac{n!}{(n-m-k_{1})!} \sum _{j_{1}=0}^{k_{1}}
\sum _{j_{2}=0}^{k_{1}-j_{1}} \cdots \sum _{j_{r-1}=0}^{k_{1}-j_{1}}
\sum _{j_{r}=0}^{k_{1}-j_{1}}
\frac{(-r)^{k_{1}-\sum _{i=1}^{r}j_{i}}}{( k_{1}-\sum _{i=1}^{r}j_{i})!}
\prod _{i=1}^{r} \frac{u_{j_{i}}}{j_{i}!}.
\nonumber
\end{eqnarray}
If we set $n-m$ as the $n$ in the previous section, as a final result we
obtain
\begin{equation}
x^{m}y(x)^{r}\to
\begin{cases}
0, & n< m
\\[8pt]
\displaystyle{ h^{m} \sum _{j_{1}=0}^{n-m} \sum _{j_{2}=0}^{n-m}
\cdots \sum _{j_{r}=0}^{n-m}
\frac{n!(1-r)^{n-m-\sum _{i=1}^{r}j_{i}}}{(n-m-\sum _{i=1}^{r}j_{i})!}
\prod _{i=1}^{r} \frac{u_{j_{i}}}{j_{i}!}},& n\ge m .
\end{cases}
\label{eqC.21}
\end{equation}

\section*{Acknowledgment}
The research of M.A.R. and P.T. has been supported by the Project PID2024-156610NB-I00
of Ministerio de Ciencia, Innovaci\'on y Universidades. P.T. has also
been supported by the Severo Ochoa Programme for Centres of Excellence in R\&D   (CEX-2023-001347-S), Ministerio de Ciencia, Innovaci\'{o}n y Universidades y Agencia Estatal de Investigaci\'on, Spain. P.T. is a member of the Gruppo
Nazionale di Fisica Matematica (GNFM) of the Istituto Nazionale di Alta
Matematica (INdAM).


\begin{thebibliography}{99}


\bibitem{AL1} M.J. Ablowitz and J.F. Ladik. Nonlinear differential-difference equations. \textit{J. Math. Phys.} \textbf{16}, 598--603 (1975).

\bibitem{AS} M. Abramowitz and I.A. Stegun eds.  \textit{Handbook of Mathematical Functions with Formulas, Graphs, and Mathematical Tables}, Chapter 15, New York, Dover (1965).

\bibitem{ash} A. Ashtekar, Quantum geometry and gravity: recent advances, in \textit{General Relativity and Gravitation}, Durban, 2001, edited by N. T. Bishop and S. D. Maharaj. World Scientific, New York, (2002).

\bibitem{BBN1986} M. Barnabei, A. Brini and G. Nicoletti. A general umbra1 calculus, Science and CompzLters. \textit{Advances in Math. Suppl. Studies} \textbf{10}, 221-244, (1986). 

\bibitem{BS1} A. I. Bobenko and Y. Suris. \textit{Discrete differential geometry. Integrable structure}. Graduate Studies in Mathematics, \textbf{98}. American Mathematical Society, Providence, RI (2008).

\bibitem{BS2} A.I. Bobenko and Y. Suris. Integrable systems on quad-graphs. \textit{Int. Math. Res. Not.} \textbf{11}, 573--611 (2002).

\bibitem{BG2017} M. Bohner and S. G. Georgiev. \textit{Multivariable Dynamic Calculus on Time Scales}.  Springer Cham (2016).

\bibitem{BF} Y. Bouguenaya and D.B. Fairlie. A finite difference scheme with a Leibnitz rule. \textit{J. Phys. A} \textbf{19}, 1049--1053 (1986).

\bibitem{BMN} V.M. Bukhstaber, A.S. Mishchenko and S.P. Novikov. Formal groups and their role in the apparatus of algebraic topology. \textit{Uspehi Mat. Nauk}, \textbf{26},  63--90  (1971).

\bibitem{diV} L. Di Vizio. Arithmetic theory of q-difference equations:the q-analogue of Grothendieck-Katz's conjecture on p-curvatures. \textit{Invent. Math.} \textbf{150}, 517--578 (2002).

\bibitem{Nov3} I. Dynnikov and S.P. Novikov. Geometry of the triangle equation on two-manifolds. \textit{Mosc. Math. J.} \textbf{3}, no. 2, 419--438 (2003).

\bibitem{Franke} C.H. Franke. Picard-Vessiot theory of linear homogeneous difference equations. \textit{Trans. Am. Math. Soc.} \textbf{108}, 491--515 (1963).

\bibitem{FL} R. Friedberg and T.D. Lee. Discrete quantum mechanics. \textit{Nucl. Phys. B} \textbf{225}, 1--52 (1983).

\bibitem{GK} A.B. Goncharov and  R. Kenyon. Dimers and cluster integrable systems. \textit{Ann. Sci. \'Ecole Norm. Sup.}, 
\textbf{46} 747--813 (2013)

\bibitem{LNP} B. Grammaticos Y. Kosmann-Schwarzbach and T. Tamizhmani (Eds.) \textit{Discrete integrable systems}. Lect. Notes in Physics \textbf{644}, Springer (2004).

\bibitem{Haze} M. Hazewinkel. \textit{Formal Groups and Applications}. Academic Press, New York, 1978.

\bibitem{Hilger1990} S. Hilger. Analysis on Measure Chains--A Unified Approach to Continuous and Discrete Calculus. \textit{Res.  Math.} \textbf{18}, \textbf{1}, 18--56, 1990.

\bibitem{Honda} T. Honda.  Formal groups and zeta functions. \textit{Osaka J. Math.} \textbf{5}, 199--213 (1968).

\bibitem{thooft} G. 't Hooft. Quantization of point particles in (2+1)-dimensional gravity and spacetime
discreteness. \textit{Class. Quantum Grav.} \textbf{13}, 1023--1039 (1996).

\bibitem{ismail} M.H. Ismail. \textit{Classical and quantum orthogonal polynomials in one variable}. Cambridge University Press, (2004).

\bibitem{KSS} M. Kato, M. Sakamoto and H. So. Taming the Leibniz rule. \textit{J. High En. Phys.}  \textbf{05}, 057 (2008).

\bibitem{Kolchin} E.R. Kolchin. \textit{Differential Algebra and Algebraic Groups}. Academic Press, New York, (1976).

\bibitem{Nov2} I.M. Krichever and S.P. Novikov. Trivalent graphs and solitons (Russian).  \textit{Uspekhi Mat. Nauk}  \textbf{54},  no. 6 (330), 149--150 (1999);  translation in \textit{Russian Math. Surveys}  \textbf{54},  no. 6, 1248--1249 (1999).

\bibitem{Kuper1} B.A. Kupershmidt. Discrete Lax equations and differential-difference calculus. \textit{Ast\'erisque}  No. \textbf{123}, 1--212 (1985).

\bibitem{Lazard} M. Lazard. Sur les groupes de Lie formels \`a un param\`etre. \textit{Bull. Soc. Math. France} \textbf{83}, 251--274 (1955).

\bibitem {LTW1} D. Levi, P. Tempesta and P. Winternitz. Umbral Calculus, Difference Equations and the Discrete Schr\"{o}dinger Equation. \textit{J. Math. Phys.} \textbf{45}, 4077--4105 (2004).

\bibitem{LWYbook} D. Levi, P. Winternitz and R.I. Yamilov. Continous Symmetries and Integrability of Discrete Equations- CRM Monographs Series, AMS vol. 38 (2023).

\bibitem{LR1995} D. Loeb and G.C. Rota. Recent Contributions to the Calculus of Finite Differences: A Survey. arXiv:math/9502210v1 (1995)

\bibitem{Mac Lane} S. Mac Lane. \textit{Categories for the Working Mathematician}. Graduate Texts in Mathematics 5 (2nd ed.), Springer-Verlag (1998).

\bibitem{MM} I. Montvay and G. M\"{u}nster. \textit{Quantum Fields on a lattice}. Cambridge Monographs on Mathematical Physics, Cambridge University Press, (1994).

\bibitem{MV} J. Moser and A.P. Veselov. Discrete versions of some classical integrable systems and factorization of matrix polynomials. \textit{Commun. Math. Phys.} \textbf{139}, 217--243 (1991).

\bibitem{Nov1} S.P. Novikov and A.S. Shvarts. Discrete Lagrangian systems on graphs. Symplecto-topological properties (Russian).  \textit{Uspekhi Mat. Nauk}  \textbf{54},  no. 1 (325), 257--258 (1999);  translation in  \textit{Russian Math. Surv.}  \textbf{54},  no. 1, 258--259 (1999).

\bibitem{olver} P.J. Olver. \textit{Applications of Lie Groups to Differential Equations}, 2nd edition. Springer, (1993).

\bibitem{PS1997} M. van der Put and M.F. Singer. \textit{Galois Theory of Difference Equations}, Vol. 1666 of Lecture Notes in Mathematics, Springer--Verlag, Heidelberg (1997).

\bibitem{PS2003} M. van der Put and M.F. Singer. \textit{Galois Theory of Linear Differential Equations}, vol. 328 of \textit{Grundlehren der mathematiechen Wissenshaften}. Springer, Heidelberg (2003).

\bibitem{Quillen} D. Quillen.  On the formal group laws of unoriented and complex cobordism theory. \textit{Bull. Amer. Math. Soc.} \textbf{75},  1293--1298 (1969).

\bibitem{Ray} N. Ray. Umbral calculus, binomial enumeration and chromatic polynomials. \textit{Trans. Am. Math. Soc.}, \textbf{309}, no 1, 191--213 (1988).

\bibitem{Ray2} N. Ray. \textit{Symbolic calculus: a 19th century approach to MU and BP}, in Homotopy Theory, E. Rees and J. D. S. Jones (eds.), London Math. Soc. Lect. Notes. Series \textbf{117}, pp. 195--238, Cambridge University Press (1987).

\bibitem{RRT2025} D. Reyes, M.A. Rodr\'iguez and P. Tempesta. A Frobenius Type Theory for Discrete Systems. {Studies Appl. Math.} \textbf{154}, e70037 (2025).

\bibitem{RT2024} M.A. Rodr\'iguez and P. Tempesta. A new discretization of the Euler equation via the finite operator theory. \textit{Open Comm. Nonl. Math. Phys.}, Special Issue in Memory of Decio Levi (2024). (hal-04209920v2)


\bibitem{Roman} S. Roman. \textit{The Umbral Calculus}. Academic Press, New York, (1984).

\bibitem{RR} S. Roman and G.C. Rota. \textit{The umbral calculus}. Adv. Math. \textbf{27}, 95--188 (1978).

\bibitem {Rota} G.C. Rota. \textit{Finite Operator Calculus}. Academic Press, New York, (1975).

\bibitem{RS} C. Rovelli and L. Smolin. Discreteness of area and volume in quantum gravity. \textit{Nucl. Phys. B} \textbf{442}, 593--619 (1995).

\bibitem{Sauloy} J. Sauloy. Galois theory of Fuchsian q-difference equations. \textit{Ann. Sci. \'Ecole Norm. Sup.} (4) \textbf{36} (6), 925--968 (2003).

\bibitem{Serre} J.--P. Serre. \textit{Cohomologie Galoisienne}, vol. 5 of Lecture Notes in Mathematics, Springer--Verlag, New York (1964).

\bibitem {Suris} Yu.B. Suris, \textit{The Problem of Integrable Discretization: Hamiltonian Approach}, Progress in Mathematics, Vol. 219. Basel: Birkh\"auser, 2003.
 
 \bibitem {Tempesta2} P. Tempesta. Discretization of nonlinear evolution equations over associative function algebras. \textit{Nonl. Analysis}, \textbf{72}, 3237--3246 (2010).

\bibitem{PT2010}  P. Tempesta.  L--series and Hurwitz zeta functions associated with the universal formal group. \textit{Annali Sc. Normale Superiore, Classe di Scienze}, IX, 1--12 (2010).

\bibitem{TempestaJDE} P. Tempesta. Integrable maps from Galois differential algebras, Borel transforms and number sequences. \textit{J. Diff. Equations}, \textbf{255}, 2981--2995 (2013).

\bibitem{TempestaTAMS} P. Tempesta. The Lazard Formal Group, Universal Congruences and Special Values of Zeta Functions. \textit{Trans. Am. Math. Soc.} \textbf{367} 7015--7028 (2015).

\bibitem{VeselovRMS1991} A.P. Veselov. Integrable Maps. Russian Mathematical Surveys \textbf{46}:5 1--51 (1991). 

\bibitem{Ward} R. S. Ward. Discretization of integrable systems. \textit{Phys. Lett. A} \textbf{165}, 325--329 (1992).


\end{thebibliography}
\end{document}